\documentclass[11pt]{article}
\usepackage[T1]{fontenc}
\usepackage{lmodern}
\usepackage[margin=1in]{geometry}
\usepackage{amsmath,amssymb,amsthm,mathtools}
\usepackage{microtype,booktabs,enumitem,array,needspace}
\usepackage[numbers,sort&compress]{natbib}
\usepackage[dvipsnames]{xcolor}
\usepackage[colorlinks=true,linkcolor=MidnightBlue,citecolor=MidnightBlue,urlcolor=MidnightBlue]{hyperref}
\newcommand{\doi}[1]{doi: \href{https://doi.org/#1}{\nolinkurl{#1}}}
\usepackage{fancyhdr}

\usepackage{bbm}
\usepackage{authblk}

\numberwithin{equation}{section}
\newtheorem{theorem}{Theorem}[section]
\newtheorem{proposition}[theorem]{Proposition}
\newtheorem{lemma}[theorem]{Lemma}
\newtheorem{corollary}[theorem]{Corollary}
\theoremstyle{remark}

\newcommand{\RP}{\mathbb{RP}}
\newcommand{\R}{\mathbb R}

\newcommand{\Z}{\mathbb Z}
\newcommand{\E}{\mathbb E}
\newcommand{\SO}{\mathrm{SO}(3)}
\newcommand{\so}{\mathfrak{so}(3)}
\newcommand{\tr}{\operatorname{Tr}}
\newcommand{\supp}{\operatorname{supp}}
\newcommand{\diag}{\operatorname{diag}}
\newcommand{\HS}{\mathrm{HS}}
\newcommand{\op}{\mathrm{op}}
\newcommand{\dd}{\mathrm d}
\newcommand{\av}[1]{\left\langle #1\right\rangle}
\newcommand{\norm}[1]{\left\lVert #1\right\rVert}
\newcommand{\abs}[1]{\left\lvert #1\right\rvert}

\newcommand{\calE}{\mathcal{E}}
\newcommand{\calL}{\mathcal{L}}
\newcommand{\calC}{\mathcal{C}}

\newcommand{\nv}{\mathbbm{n}}
\newcommand{\Q}{\mathsf Q}

\title{Faster-than-power-law decay of order parameter correlations
in the two-dimensional $\RP^2$ model:
a rigorous result by \textit{vibe-proving}}

\author{Masaki Oshikawa}
\affil{Department of Physics, the Ohio State University, Columbus, OH 43210, USA}   

\affil{Institute for Solid State Physics,
the University of Tokyo, Kashiwa, Chiba 277-8581, Japan}

\affil{Kavli Institute for the Physics and Mathematics of the Universe (WPI) 
\\
the University of Tokyo, Kashiwa, Chiba 277-8583, Japan}

\date{October 22, 2026}

\begin{document}
\maketitle
\thispagestyle{plain}

\begin{abstract}
We consider the $\RP^2$ model on the two-dimensional square lattice with nearest-neighbor interactions.
It is obtained by replacing the spin in the standard classical Heisenberg model with
a director, which does not distinguish two opposite directions.
Unlike the standard classical Heisenberg model or the O(3) nonlinear sigma model,
the $\RP^2$ (and its cousin $\RP^3$) model allows point vortex defects classified by $\pi_1(\RP^2)=\Z_2$
in two dimensions.
It is tempting to expect the $\Z_2$ vortices to induce a phase transition analogous to the well-established
Berezinskii-Kosterlitz-Thouless transition in the O(2) model (XY model) in two dimensions.
Indeed, the possibilities of a critical phase or a critical point with power-law correlations
of the order parameter in the $\RP^2$ and $\RP^3$ models have been discussed in the literature.
However, the challenging nature of the problem has prevented a definitive conclusion on the phase diagram so far.
In this paper, we report a rigorous mathematical proof by Artificial Intelligence (ChatGPT 6 Astra) that
the absolute value of the two-point correlation function of order parameters at distance $r$
in the $\RP^2$ and $\RP^3$ models on the two-dimensional square lattice is bounded above by
$C_\beta \exp\{-c_\beta[\log(2+r)]^2\}$ for any finite inverse temperature $\beta\ge0$,
with positive constants $C_\beta$ and $c_\beta$ depending only on $\beta$.
Namely, the correlation function decays faster than any power of $r$.
The proof is formalized and verified in Lean;
the formalization is publicly
available at \href{https://doi.org/10.5281/zenodo.23548847}{doi:10.5281/zenodo.23548847}.
A heuristic argument based on a one-parameter effective theory ansatz at large distances
combined with the rigorous correlation bound suggests (but does not prove) that
the $\Z_2$ vortices are always unbound at any nonzero temperature.
\end{abstract}

\section{Introduction and Main Result}
\label{sec:introduction}

October 6, 2026 turned out to be a historic day in mathematics and mathematical sciences, if not for humanity as a whole.
On that day, OpenAI released more than 700 ``preprints'',
each claiming a significant mathematical proof~\cite{OpenAIRelease2026,OpenAICollection2026}
produced by Artificial Intelligence (AI).
This poses a serious question about the role of humans, or about the relation between humans
and AI, in (even the near) future.
On the other hand, the mathematical prowess of AI certainly gives us unprecedented opportunities.
(See Sec.~\ref{sec:personal} for more personal reflections on this issue by the human author.)

One of the results announced in the OpenAI release concerns the following fundamental
problem in statistical mechanics and field theory.
The classical Heisenberg model
\begin{equation}
      \calE = - J \sum_{\langle x, y \rangle} \mathbbm{n}_x \cdot \mathbbm{n}_y ,
\label{eq:model_O3}
\end{equation}
where the sum is taken over nearest-neighbor pairs of sites $\langle x, y\rangle$ on a lattice and
$\mathbbm{n}_x$ is a three-component unit vector (${\mathbbm{n}_x}^2 = 1$) at site $x$,
is one of the fundamental models in statistical mechanics.
It has a global SO(3) symmetry of rotation of all spins.
For this classical model, the Hohenberg--Mermin--Wagner theorem and its classical counterpart
exclude spontaneous breaking of the continuous rotation symmetry at nonzero temperature
in two dimensions~\cite{Hohenberg1967,MerminWagner1966,Mermin1967,Halperin2019}.

At low temperatures, smooth long-wavelength fluctuations of the Heisenberg model are
expected to be described by a continuum field theory, the O(3) nonlinear sigma model
defined by the Lagrangian density
\begin{equation}
      \calL = \frac{1}{2g} (\partial_\mu \mathbbm{n})^2 ,
\end{equation}
in two dimensions, again subject to the constraint $\mathbbm{n}^2 = 1$.
It can be said that the ``target space'' of this model is the two-dimensional sphere $S^2$.
The perturbative renormalization group analysis of the O(3) nonlinear sigma model in two dimensions gives
the RG equation~\cite{Polyakov1975,Brezin1976}
\begin{equation}
\frac{dg}{d \ell} = \frac{1}{2\pi} g^2 + O(g^3),
      \label{eq:RG_O3}
\end{equation}
where $\ell$ is the logarithm of the length scale, in the weak-coupling (small $g$) limit.
This implies that, even if the ``bare'' coupling $g$ at the microscopic scale is small,
the effective coupling grows at larger length scales.
While the perturbative RG analysis alone cannot determine the eventual behavior at
large length scales, the simplest scenario is that the effective coupling grows to infinity as
the length scale goes to infinity.
Since the running coupling $g$ may be regarded as an effective temperature at the given scale,
this proposed monotonic flow towards $g=\infty$ suggests a disordered phase at every nonzero temperature,
with exponentially decaying correlations. This is a physical inference beyond the controlled weak-coupling calculation.

This is the standard ``textbook'' picture of the classical Heisenberg model
or the O(3) nonlinear sigma model in two dimensions, which has been widely accepted in the physics community.
However, it was not without controversy; there was a proposal of a massless phase with algebraic decay
of correlation functions at low temperatures, as we briefly review in Appendix~\ref{app:historyO3}.
We note that, even though the Hohenberg--Mermin--Wagner theorem excludes an ordered phase with spontaneous symmetry breaking
of the continuous rotation symmetry, it does not exclude a phase transition or a massless phase.
This is clearly seen in the O(2) model (XY model) in two dimensions, which has
a low-temperature phase with algebraic decay of correlation functions and a finite-temperature
Berezinskii--Kosterlitz--Thouless (BKT) transition~\cite{Berezinskii1971,Kosterlitz1973,Frohlich1981}.

The recent OpenAI release gives a Lean-verified proof of exponential spin decay in
the O($n$) model for $n\geq3$ on the square lattice~\cite{OpenAI2026}.
It gives a rigorous basis for the standard picture discussed above,
settling the controversy significantly.

There is an interesting variant of the Heisenberg model or the O(3) nonlinear sigma model,
which we call the $\RP^2$ model.
Here, instead of the two-dimensional sphere $S^2$,
the target space is the real projective space $\RP^2$, which is obtained
from the two-dimensional sphere $S^2$ by identifying antipodal points $\vec{n}$ and $-\vec{n}$.
This corresponds to a classical spin model with ``director'' spins, which have
directions but no orientations; a physical example is a rod-like molecule
which is symmetric between its head and tail.
The ordering of the director spins is called nematic order, and the $\RP^2$ model
is a simple model of nematic liquid crystals.
There are various realizations of such a model, but one standard choice is
\begin{equation}
      \calE = - J_q \sum_{\langle x,y \rangle} \left( \mathbbm{n}_x \cdot \mathbbm{n}_y  \right)^2 ,
\label{eq:model_RP2}
\end{equation}
also with the three-dimensional vector spin $\mathbbm{n}_x$ with the constraint ${\mathbbm{n}_x}^2 = 1$.
This is indeed the two-dimensional version of the Lebwohl--Lasher model~\cite{Lebwohl1972}, which
was introduced as a simple model of nematic liquid crystals. The interaction is invariant under the independent spin inversion
$\mathbbm{n}_x \to - \mathbbm{n}_x$ at each site $x$ ($\Z_2$ gauge symmetry).
The global SO(3) rotation symmetry is still present.

Because of the gauge symmetry, the vector spin $\mathbbm{n}$ cannot be an order parameter.
The standard order parameter which can detect the nematic order is the quadrupolar tensor
\begin{equation}
      Q(\mathbbm{n}) = \mathbbm{n}  \mathbbm{n}^T - \frac{1}{3} \mathbbm{1}_3,
\label{eq:def_Q}
\end{equation}
where $\mathbbm{1}_3$ is the $3 \times 3$ identity matrix.
Its two-point correlation function is
\begin{equation}
     \calC_Q(x, y) =
      \langle \tr{\left[ Q(\mathbbm{n}_x) Q(\mathbbm{n}_y) \right]} \rangle
      = \langle \left( \mathbbm{n}_x \cdot \mathbbm{n}_y \right)^2 - \frac{1}{3} \rangle,
\label{eq:def_CQ}
\end{equation}
which is zero if the spins $\mathbbm{n}_x$ and $\mathbbm{n}_y$ are independently random
(i.e. uniformly distributed on the two-dimensional sphere $S^2$).

Since the round target space $\RP^2$ is locally isometric to $S^2$,
the one-loop RG equation is identical to that of the O(3) model~\eqref{eq:RG_O3};
perturbative sigma-model renormalization depends on the local target geometry~\cite{Friedan1980}.
Again, the simplest scenario is that the effective coupling grows monotonically from an arbitrarily small non-zero value
to infinity at large length scales, and the system is disordered at any non-zero temperature.
However, the different global topology of the target space $\RP^2$ from that of $S^2$
\begin{equation}
      \pi_1(\RP^2) = \Z_2 \neq 0 = \pi_1(S^2)
\end{equation}
allows $\Z_2$ vortices as topological point defects~\cite{ToulouseKleman1976,Mermin1979}.
Given the established BKT transition in the O(2) model driven by $\Z$ vortices at finite temperature,
it is natural to expect a similar transition in the RP$^2$ model~\cite{Solomon1981}.
On the other hand, the critical low-temperature phase of the O(2) model has a solid theoretical foundation,
namely the massless free-boson (Gaussian) field theory.
The ``asymptotically free'' RG flow of the $\RP^2$ model suggests a disordered phase even at very low temperatures;
if that is the case, there is no basis to host a critical point or critical phase even in the
presence of $\Z_2$ vortices.
However, the perturbative RG analysis cannot determine the fate of the RG flow at strong couplings,
and it certainly does not exclude the fascinating possibility of a
$\Z_2$ vortex-induced transition or a critical phase.

In condensed matter physics, the $\Z_2$ vortex-driven transition was proposed for
the classical Heisenberg antiferromagnet on the triangular lattice~\cite{KawamuraMiyashita1984a,KawamuraMiyashita1984b}.
The effective low-energy theory of the fluctuating local $120$-degree spin pattern
is a nonlinear sigma model with target space $\SO\simeq\RP^3$~\cite{DombreRead1989,Caffarel2001}.
Since $\pi_1(\RP^3) = \Z_2$, the $\Z_2$ vortices are allowed as in the $\RP^2$ model.
For the round, isotropic $\RP^3$ sigma model, the local perturbative RG is that of O(4),
suggesting (but not proving) a disordered phase at every nonzero temperature.
The effective theory of the frustrated magnet can have different stiffnesses,
so equality of target spaces alone does not identify it with the round projective model~\cite{Caffarel2001}.
The analogy between the $\RP^2$ and $\RP^3$ models is nevertheless suggestive.

Naturally, there have been numerous numerical studies of the $\RP^2$ (and related) model in two dimensions.
While classical statistical mechanics models in two dimensions are generally numerically tractable,
the large correlation length of the $\RP^2$ model at low temperatures makes it rather difficult
to draw definitive conclusions from numerical data.
The numerical data do suggest unbinding of $\Z_2$ vortices at finite temperatures, at least
in a certain length scale. However, the question is whether the presence of $\Z_2$ vortices leads
to a true phase transition or a critical phase.
There have been various observations and claims about the possible criticality or phase transition in the $\RP^2$ model,
including the presence of a critical phase similar to the low-temperature phase of the O(2) model.
In recent years, with the development of more powerful numerical methods, the overall trend of the numerical
study seemed to favor a crossover at finite length scales rather than a true phase transition or critical phase.
However, the situation was certainly far from being settled~\cite{Kunz1992,Ozeki2019,UedaOshikawa2022,Burgelman2023}.

In this paper, we report a rigorous mathematical proof by AI (ChatGPT 6 Astra)
of the absence of a critical phase
or a critical point with power-law correlation of the order parameter,
in the two-dimensional $\RP^2$ model at any non-zero temperature.
It is proved that the correlation function of the order parameter
(quadrupolar tensor) is bounded as
\begin{equation}
      |\calC_Q(x, y)| \leq C_\beta \exp{\left[ - c_\beta \left( \log{(2 + r)} \right)^2 \right]} ,
\label{eq:main_result_simple}
\end{equation}
where $r = \| x - y \|_\infty$,
for some positive constants $C_\beta$ and $c_\beta$
which depend on the dimensionless inverse temperature $\beta = J_q / T$ (with $k_B=1$).
For more precise description of the theorem, see Sec.~\ref{sec:theorem}.
The right-hand side behaves roughly as $1/r^{c_\beta \log{r}}$ at large $r$, which decays faster than any power-law decay
including possible logarithmic corrections.
While it does not establish the exponential decay of correlation functions at all temperatures,
it rigorously rules out several of the previously proposed scenarios involving
a critical phase or a critical point in the $\RP^2$ model with power-law correlations of the order parameter.
A similar bound was also obtained for the $\RP^3$ model, with similar implications.
We do not have a rigorous statement on the binding of $\Z_2$ vortices.
However, based on heuristic physical arguments, the above bound suggests that
the $\Z_2$ vortices are always unbound at any non-zero temperature,
as we will discuss in Sec.~\ref{sec:consequences}.

Nevertheless, the present result still leaves the possibility of a phase transition which does not
involve power-law decay of the correlation function of the quadrupolar order parameter, such as a first-order transition.
It also allows the possibility of an unconventional critical point or critical phase
with ``super-polynomial'' decay of correlation functions satisfying the bound.
To our best knowledge, no such scenario has been proposed for the $\RP^2$ model in the literature.
On the other hand, such superpolynomially decaying correlation functions have been discussed
in several contexts, such as Cardy--Ostlund model with quenched disorder
in two dimensions~\cite{CardyOstlund1982,Ristivojevic2012Superrough,Perret2012Superrough,LeDoussal2013Correlations},
and the sliding columnar phase in three dimensions~\cite{Golubovic2000Sliding}.
(See Ref.~\cite{OHern1998Nonlinear,OHern1999Sliding} for the sliding columnar phase.)
However, these are quite different from the present $\RP^2$ model, and
at this point we see no reason to expect a similar behavior in our case.

On the other hand, it is still natural to expect
that the correlation functions decay exponentially at any non-zero temperature,
although it cannot be inferred from the present result.
Proving (or disproving) this expectation, and clarification of the entire phase diagram, are important open problems
for the future.

We also note that the present result is specific to the concrete model,
and does not rule out the possibility of a conventional critical point or conventional critical phase
in modified models with the same target space and symmetries.

Even with various limitations, the present result is a significant step forward
in one of the long-standing controversies in a fundamental model in statistical mechanics and field theory
involving topological defects.
We hope that this result will stimulate further studies on the $\RP^2$ model and related models, 
and also further applications of AI-assisted (or produced) rigorous results on
problems of significant physical interest.

This paper is organized as follows.
In Sec.~\ref{sec:historyRP2}, we give a historical overview of the studies on the $\RP^2$ model
(and related models with $\Z_2$ vortices) in two dimensions.
An overview of the controversies on the ``parent'' model, the O(3) model in two dimensions,
is given in Appendix~\ref{app:historyO3}.
In Sec.~\ref{sec:consequences}, we discuss the physical consequences of the main theorem
and argue for (but not prove) the absence of $\Z_2$ vortex unbinding transition
in two dimensions at finite temperatures.
In Sec.~\ref{sec:theorem}, we present the main theorem in a precise manner,
together with a brief outline of the proof and its predecessors.
Statements on the formalized proof in Lean are given in Sec.~\ref{sec:formalization}.
In Sec.~\ref{sec:personal}, personal background and reflections by the human author
are given.
Methodologies including the statements on the AI use are given in Sec.~\ref{sec:method}.
The exposition of the proof is given in Appendices~\ref{app:input}--\ref{app:scales}.

\section{Historical overview of studies on the \texorpdfstring{$\RP^2$}{RP2} model in two dimensions}
\label{sec:historyRP2}

Here we briefly review the controversies on the $\RP^2$ model
(and related models) in two dimensions.

\subsection{Topological motivation and the early vortex proposal}
\label{sec:historyRP2:origin}

The local geometry of $\RP^2$ is that of $S^2$, but
$\pi_1(\RP^2)=\Z_2$ permits point defects absent from the $S^2$ model
\cite{ToulouseKleman1976,Mermin1979}. These defects, often called
$\Z_2$ vortices or half-disclinations, motivated an analogy with the
Berezinskii--Kosterlitz--Thouless (BKT) transition. An early explicit
proposal was Solomon's 1981 study of two-dimensional non-Abelian spin
models~\cite{Solomon1981}. For the biquadratic projective interaction,
an energy--entropy argument and a strong-coupling expansion suggested
a vortex-driven transition. The paper also recognized that non-Abelian
spin fluctuations could alter the vortex free energy and reduce the
transition temperature to zero. Thus the existence of vortices did not
by itself establish a positive-temperature transition.

A nonlinear sigma model describes smooth, long-wavelength director
fluctuations at low temperature; its local weak-coupling expansion is
closely related to that of the O(3) model.
Taken naively, it suggests that the $\RP^2$ model is disordered at any nonzero temperature
similarly to the O(3) model, although the latter model has also been
a subject of controversy as reviewed in Appendix~\ref{app:historyO3}.
In any case, the eventual fate of the model in the thermodynamic limit
is determined by the RG flow in the strong-coupling regime
which is not accessible to the perturbative expansion.
Thus the quasi-long-range order with a power-law decay of the quadrupolar correlation function
cannot be ruled out by the perturbative RG analysis alone.

Tensor-field effective-action analyses subsequently identified
additional interactions and suggested departures from the naive O(3)
description~\cite{Yamamoto1992,Yamamoto1993}. These arguments enlarged
the possible infrared scenarios without establishing
algebraic quasi long-range order in the lattice model.

\subsection{Evidence interpreted as a critical point or critical phase}
\label{sec:historyRP2:critical}

Early simulations associated changes in the apparent mass gap and
susceptibility with vortex liberation, while also allowing a very small
nonzero gap or a nearby complex singularity as alternatives
\cite{Fukugita1982}. The influential Kunz--Zumbach studies reported
strong evidence for a topological transition accompanied by divergent
orientational correlation length and susceptibility
\cite{Kunz1991,Kunz1992}. Their correlation-length data admitted both
ordinary power-law and KT-type singular fits, illustrating the difficulty
of identifying an asymptotic transition from a limited range of scales.

Later studies sharpened, but did not unify, the critical interpretation.
Conformal-mapping calculations supported a BKT-like transition and
algebraic low-temperature order~\cite{Farinas2003}. Standard finite-size
scaling was instead used to argue for a critical point with power-law
correlations and also a critical phase at lower temperatures
in Ref.~\cite{Mondal2003}. Cluster simulations subsequently found
indications of a line of critical points, while acknowledging unresolved
diagnostics~\cite{Shabnam2016}. Nonequilibrium-relaxation calculations
favored KT-type dynamical scaling and estimated critical exponents at
and below the proposed transition~\cite{Ozeki2019}.

The present result excludes the asymptotic power-law decay of
quadrupolar correlations associated with these critical interpretations, within
the model and class of states covered by the theorem. It does not
contradict the reported finite-distance or finite-time observations
themselves: apparent scaling over accessible windows may precede a
crossover. In particular, numerical evidence for slow relaxation alone
does not determine the equilibrium large-distance correlation.

\subsection{Arguments and evidence against algebraic criticality}
\label{sec:historyRP2:noncritical}

Probability distributions, hyperscaling tests, Binder cumulants and
stiffness measurements were used to question a low-temperature line of
critical points and to favor a crossover interpretation
\cite{Paredes2008,Farinas2010}. A comparison of correlation ratios in
the XY, Heisenberg and projective models likewise favored
pseudocritical behavior for $\RP^2$~\cite{Tomita2014}.
More recently, tensor-network renormalization found no convincing
finite-temperature conformal spectrum and suggested a single disordered
phase with a strong crossover~\cite{UedaOshikawa2022}.
Symmetry-preserving matrix-product-state calculations also found
scaling incompatible with true criticality and consistent with a
pseudocritical regime below a large correlation length
\cite{Burgelman2023}. These approaches provide complementary evidence,
although finite system sizes, observation times or tensor bond
dimensions still limit the accessible scales.

Scale-invariant scattering supplied an independent analytical argument
against nematic quasi-long-range order for three director components
\cite{Delfino2021,Diouane2021}. Subsequent work clarified subtleties in
assigning universality classes to the scattering solutions
\cite{Diouane2023}. These continuum analyses do not provide the uniform
lattice correlation estimate proved here. Conversely, our estimate
supports their conclusion about algebraic order without establishing
every stronger claim of a single analytic phase or exponential decay.

The zero-temperature universality question is separate. Finite-size
scaling suggested projective behavior distinct from O($N$)
\cite{Caracciolo1993,Bonati2020}, whereas weak-coupling results for
constraint actions and defect-based arguments supported a common
continuum limit~\cite{Hasenbusch1996,Niedermayer1996}.
Neither proposed universality class entails a positive-temperature
critical phase. In particular, the defect-suppression comparison in
Ref.~\cite{Niedermayer1996} uses an unproved monotonicity assumption
for the tensor correlation length; decay in the O(3) model alone does
not supply that comparison.

In Refs.~\cite{Latha2018,Latha2026}, a low-temperature phase without
the power-law decay of quadrupolar correlations was proposed.
However, the reported exponential relaxation occurs towards an apparent nonzero background in the full quadrupolar correlation function.
Interpreted as thermodynamic nematic long-range order, this background conflicts with the Hohenberg--Mermin--Wagner theorem
(and the present result).
If the finite-size data correctly represent equilibrium correlations,
the apparent plateau must disappear in the appropriate large-size and large-distance limits.
Consequently, the fitted relaxation length would only
describes short-range physics, and
does not by itself establish a thermodynamic phase structure.

\subsection{Other proposed transitions and the limits of the comparison}
\label{sec:historyRP2:other}

Absence of algebraic quadrupolar order does not establish analyticity
of the free energy or exclude a transition detected by another
observable. 

An example of a proposed transition without algebraic
spin order comes from frustrated magnets. Kawamura and Miyashita
proposed $\Z_2$-vortex unbinding in the triangular-lattice Heisenberg
antiferromagnet~\cite{KawamuraMiyashita1984a,KawamuraMiyashita1984b}.
The slowly varying local $120$-degree structure defines an oriented
triad, with order-parameter space $\SO\simeq\RP^3$, and admits a
long-wavelength sigma-model description~\cite{DombreRead1989}.
Subsequent spin-stiffness, vorticity and correlation studies examined
the competition between spin waves and defects
\cite{SouthernYoung1993,SouthernXu1995,Wintel1995,Caffarel2001}.
The later two-length-scale scenario allows a diverging vortex scale
while the spin correlation length remains finite, even below the
transition; the resulting state was called a \emph{spin-gel}
\cite{Kawamura2010,Kawamura2011}. Spin transport was proposed as a
probe of the vortex transition~\cite{Aoyama2020}. Such a separation
of scales would not be excluded by a bound on spin correlations alone.

The target-space correspondence does not identify the triangular
antiferromagnet with the isotropic biquadratic $\RP^3$ lattice model:
the microscopic interactions and effective stiffnesses must also be
specified~\cite{Caffarel2001}. A related large-scale study did simulate
the isotropic square-lattice biquadratic $\RP^3$ Hamiltonian, using it
as an effective model, and obtained evidence compatible with a vortex
transition while finding orientational lengths beyond its largest
size $L=16384$~\cite{Okubo2023}. This Hamiltonian matches the
projective interaction in our extension. The theorem nevertheless
concerns its specified free-boundary measures and local limits, and
does not settle the proposed vortex transition or justify transferring
the result to the frustrated magnet.

Modified Hamiltonians require a further distinction. Introducing
independent $\Z_2$ gauge links and a plaquette coupling allows the
vortex fugacity to be tuned~\cite{SolomonStavansDomany1982}.
Monte Carlo renormalization-group work on this enlarged model found
a first-order vorticity line ending at a critical point and proposed
nearby crossover flows as an explanation of apparent projective
scaling~\cite{Catterall1998}. 
Adding a defect penalty likewise changes the Gibbs measure and,
in planar (two-component) nematic models, the transition itself~\cite{Dutta2004}.
These studies illuminate possible mechanisms,
but their transitions are not claims about the models studied in this paper.

Thus three classes should be distinguished: algebraic quadrupolar
correlations (quasi long-range order) in the specified model,
which is excluded in the states covered
by our theorem; thermodynamic or topological transitions with
sufficiently rapidly decaying full orientational correlations, which
the bound alone does not exclude; and transitions in different
Hamiltonians, to which it does not automatically apply. Compatibility
is not evidence that any remaining scenario occurs. The estimate also
permits superpolynomial but nonexponential decay; it neither identifies
such a phase nor establishes uniqueness of the Gibbs state.

\section{Heuristic discussion of the consequences of the main result}
\label{sec:consequences}

The main result of the present report
is a rigorous upper bound~\eqref{eq:main_result_simple} on the quadrupolar correlation
function $\calC_Q(r)$ at large separation $r$ (see also Theorem~\ref{thm:main} in Sec.~\ref{sec:theorem}).
By itself, it makes no statement
about $\Z_2$ vortices. In this section, we combine the bound with a simple
effective theory and an assumption of monotone renormalization-group (RG)
flow, and argue heuristically that the $\Z_2$ vortices are unbound at every
nonzero temperature. The argument uses only the inequality
\eqref{eq:main_result_simple}, not a hypothesis on the actual asymptotic
form of $C_Q(r)$. It therefore applies equally if the true decay is
exponential, which remains a natural possibility not established by the
present result.

Let us assume the simple physical picture that both spin fluctuations and
vortex textures can be described by the same running, isotropic
stiffness in the simple effective theory
\begin{equation}
 F_{\mathrm{eff}}[\mathbbm{n}]/T
 =\frac{1}{2g(\ell)}\int\dd^2x\,|\nabla\mathbbm{n}|^2,
 \qquad g(\ell)=T/\rho_s(\ell).
 \label{eq:elastic}
\end{equation}
Here $g(\ell)$ is the running coupling in the effective theory at length
$a e^\ell$, where $a$ is the lattice spacing, and $T$ is the physical
temperature (with $k_B=1$). We use leading spin-wave and dilute-defect
formulae with this running coupling, neglecting additional defect
renormalizations or possible extra terms generated by RG.
We note that we consider the dependence of the effective theory
on the logarithmic length scale $\ell$, while $T$ is fixed to the given
physical temperature.
Real-projective sigma models and their $\Z_2$ defects are discussed,
for example, in Ref.~\cite{Nahum2013}.

The leading renormalization of the quadrupolar order parameter is obtained
by integrating over a thin momentum shell in a local
tangent frame. Each tangent component $\pi_\alpha$ has variance
\begin{equation}
 \langle\pi_\alpha^2\rangle_{\mathrm{shell}}
 = g \frac{\dd\ell}{2\pi }=:v,\qquad \alpha=1,2.
\end{equation}
Expanding the director about its slow component gives
$\langle\Q\rangle_{\mathrm{shell}}=(1-3v)\Q_{\mathrm{slow}}$ to this
order. Applying this result to both operators in $\calC_Q$, we find
\begin{equation}
 -\frac{\dd\log \calC_Q}{\dd\ell}
 =\frac{3g(\ell)}{\pi}+O(g^2),
 \label{eq:eta}
\end{equation}
where $\calC_Q$ is now regarded as a function of $\ell=\log(r/a)$.
While more analysis is needed to determine the right-hand side in
the strong-coupling region, we take a simple ansatz that the leading-order
result in Eq.~\eqref{eq:eta} is valid for general $g(\ell)$.

\subsection{Lower bound on the running coupling}

Adopting the extrapolated leading-order relation in Eq.~\eqref{eq:eta} and
integrating it from a reference scale $\ell_0>0$ gives
\begin{equation}
 \calC_Q(\ell)= \calC_Q(\ell_0)
 \exp\!\left[-\frac3\pi\int_{\ell_0}^{\ell}g(\ell')\,\dd\ell'\right].
 \label{eq:integrated}
\end{equation}
Within the effective theory, $\calC_Q(\ell_0)>0$ for the nematic interaction
$J_q>0$. Since $[\log(2+r)]^2\ge\ell^2$ (in units with $a=1$), comparison
with the rigorous bound \eqref{eq:main_result_simple} yields
\begin{equation}
 \int_{\ell_0}^{\ell}g(\ell')\,\dd\ell'
 \ge\frac\pi3\left(c_\beta\ell^2-K_\beta\right),
 \qquad
 K_\beta=\max\!\left\{0,\log\frac{C_\beta}{C_Q(\ell_0)}\right\}.
 \label{eq:integralbound}
\end{equation}

This inequality constrains only the integral of $g$.
In principle, a coupling that is
very large on most scales could remain small on a long but sparse set
of scales, and the vortex free energy computed below, which involves
$\int\dd\ell/g$, could accumulate on that set. We exclude such a possibility
by the following natural, but unproven assumption (in addition to what we have already
assumed concerning the effective theory~\eqref{eq:elastic}):
\begin{quote}
\textit{Monotone flow.} For $\ell\ge\ell_0$, the running coupling $g(\ell)$
is nondecreasing; equivalently, the effective stiffness $\rho_s(\ell)$ does
not increase under coarse graining.
\end{quote}
This is the behavior suggested by the positive one-loop beta function
\eqref{eq:RG_O3}, which is shared by the $\RP^2$ model. Vortex screening,
if present, is expected to reduce the stiffness further and thus
reinforce the assumption. It is nevertheless not established beyond weak
coupling.

Under this assumption, $\int_{\ell_0}^{\ell}g\,\dd\ell'\le(\ell-\ell_0)\,g(\ell)
\le\ell\,g(\ell)$, and Eq.~\eqref{eq:integralbound} gives
$g(\ell)\ge(\pi/3)(c_\beta\ell-K_\beta/\ell)$. Hence
\begin{equation}
 g(\ell)\ge\kappa_\beta\,\ell,
 \qquad \kappa_\beta=\frac{\pi c_\beta}{6},
 \qquad \text{for}\ \ell\ge\ell_1
 :=\max\!\left\{\ell_0,\sqrt{2K_\beta/c_\beta}\right\}.
 \label{eq:running}
\end{equation}
The coupling must thus grow at least linearly in $\ell$, and
$\rho_s(\ell)\le T/(\kappa_\beta\ell)$ flows to zero.
For comparison, a correlation function saturating the bound,
$C_Q\sim C_0\exp(-A\ell^2)$, would give $g(\ell)\simeq2\pi A\ell/3$;
the monotonicity argument loses only a numerical factor.
A weaker assumption also suffices: if $g(\ell')\le M g(\ell)$ for all
$\ell_0\le\ell'\le\ell$ with a fixed $M\ge1$, Eq.~\eqref{eq:running} holds
with $\kappa_\beta$ replaced by $\kappa_\beta/M$.

The role of the theorem in this argument is to exclude a flow towards
a finite fixed point $g(\ell)\to g_*<\infty$. Within the same ansatz,
such a flow would give a power law $C_Q\sim r^{-3g_*/\pi}$, and the
standard vortex argument below would bind the $\Z_2$ vortices when
$g_*<\pi/8$. This BKT-like scenario is precisely what the bound rules out.

\subsection{Free energy of \texorpdfstring{$\Z_2$}{Z2} vortices}

Now let us consider the free energy in the presence of $\Z_2$ vortices,
assuming the effective theory in Eq.~\eqref{eq:elastic} with
the running coupling satisfying Eq.~\eqref{eq:running}.
For a single vortex, take a disk of radius $L$ whose boundary enforces
the nontrivial element of $\pi_1(\RP^2)=\Z_2$, and initially pin the
core at its center. The minimal half-turn texture has the representative
\begin{equation}
 \nv(r,\theta)=\left(\cos\frac\theta2,\sin\frac\theta2,0\right),
 \qquad |\partial_\theta\nv|^2=\frac14.
 \label{eq:vortex}
\end{equation}
Although its lift changes sign after one circuit, the director is
single-valued. Inserting this texture into Eq.~\eqref{eq:elastic} and
integrating one annulus yields
\begin{equation}
 \frac{\dd F_v^{\mathrm{pin}}}{T}
 \simeq\frac{1}{2 g(\ell)}
       \int_0^{2\pi}\frac14\,\dd\theta\,\dd\ell
 =\frac{\pi}{4g(\ell)}\,\dd\ell.
 \label{eq:shell}
\end{equation}
The notation denotes the constrained excess free energy within this
elastic approximation, in a vortex-free background.
Core physics and shorter scales are absorbed at the reference
length $L_1=ae^{\ell_1}$. Equation~\eqref{eq:running} then gives
\begin{equation}
 F_v^{\mathrm{pin}}(L)-F_v^{\mathrm{pin}}(L_1)
 \lesssim\frac{\pi T}{4\kappa_\beta}
 \log\frac{\log(L/a)}{\ell_1}
 =\frac{3T}{2c_\beta}\log\frac{\log(L/a)}{\ell_1}.
 \label{eq:single}
\end{equation}
The cost of an isolated vortex therefore grows at most as
$\log\log L$, instead of the $\log L$ of a vortex with finite stiffness.

Following the standard argument for the BKT transition~\cite{Kosterlitz1973},
let us take the positional entropy into account.
Allowing the core of a single $\Z_2$ vortex to move through a bulk region
of area proportional to $L^2$ contributes $2\log(L/a)$ to its entropy.
Up to a size-independent matching constant $F_*$, Eq.~\eqref{eq:single}
gives
\begin{equation}
 F_v^{\mathrm{mobile}}(L)
 \lesssim F_*+\frac{3T}{2c_\beta}\log\log(L/a)-2T\log(L/a).
 \label{eq:mobile}
\end{equation}
For every fixed $T>0$ and finite core free energy, the positional term
eventually dominates for large enough system size $L$.
Although this is just an analysis of a single vortex,
it suggests that the $\Z_2$ vortices proliferate at any nonzero temperature,
unlike in the standard BKT transition.

The same instability appears directly for a neutral pair.
Fix two cores at separation $R$ and allow their orientations to
relax in the topologically trivial total sector. Both are the same
nontrivial $\Z_2$ species. An annihilating pair can nevertheless be
represented in a common plane by windings $+1/2$ and $-1/2$; the signs
are not distinct topological charges of $\RP^2$. There are two vortex
contributions on scales smaller than $R$, while the far-field winding
cancels. The leading elastic interaction, by the usual vortex-pair
calculation~\cite{Kosterlitz1973}, obeys
\begin{equation}
 \frac{\dd F_2}{\dd\log R}
 \simeq\frac{\pi T}{2 g(\log(R/a))}
 \lesssim\frac{\pi T}{2\kappa_\beta\log(R/a)}.
\end{equation}
With the matching separation $R_1=L_1$, this gives
\begin{align}
 F_2(R)-F_2(R_1)
 &\lesssim\frac{3T}{c_\beta}
 \log\frac{\log(R/a)}{\ell_1}\,,
 \label{eq:pair}\\[3pt]
 0<-f_R=\frac{\dd F_2}{\dd R}
 &\lesssim\frac{3T}{c_\beta\,R\log(R/a)}.
 \label{eq:force}
\end{align}
The force $f_R$ is attractive, but weaker than $1/(R\log R)$ at large
separation. Here $F_2$ is the free energy for
specified core positions, with the common bulk contribution subtracted;
it does not include the number of possible relative positions. These
formulae concern relaxed defects in the dilute approximation, not all
possible constrained core orientations or a screened vortex plasma.

With one core fixed, the relative-position measure is $2\pi R\,\dd R$.
Setting $p=3/c_\beta$, Eq.~\eqref{eq:pair} gives, up to constant factors,
\begin{equation}
 Z_{\mathrm{rel}}(L)\ \gtrsim
 \int_{R_1}^{L}\frac{R\,\dd R}{a^2}
       [\log(R/a)]^{-p}
 \sim\frac{(L/a)^2}{2[\log(L/a)]^p}.
 \label{eq:pairentropy}
\end{equation}
The relative-separation distribution is therefore not normalizable
at infinite volume within this two-defect approximation. Equivalently,
the free energy per logarithmic separation interval is
$F_2(R)-2T\log(R/a)$ plus a constant. For intervals of fixed width
$\dd R$, the geometrical term is instead $-T\log(R/a)$.
This also suggests that $\Z_2$ vortices do not form bound pairs
and rather proliferate as a vortex plasma at any nonzero temperature.

This conclusion covers every correlation function compatible with
the bound~\eqref{eq:main_result_simple}, including exponential decay,
for which $g(\ell)$ grows
even faster. A proliferated vortex plasma would, in turn, be expected to
screen the quadrupolar correlations and produce exponential decay;
a correlation function that actually saturates the log-square bound
therefore does not appear self-consistent within this picture.
Thus the main result of the present paper, the rigorous bound on the
quadrupolar correlation function, suggests the absence of a phase
transition associated with $\Z_2$ vortex unbinding at any nonzero temperature.

It should be emphasized, however, that the discussion in this section is
not only mathematically non-rigorous, but also relies on uncontrolled
physical assumptions: the simple effective theory~\eqref{eq:elastic}
describing both the spin fluctuations and the $\Z_2$ vortex textures with
a single running coupling $g(\ell)$; the extension of the leading-order
relation~\eqref{eq:eta} to strong coupling; and the monotonicity of
$g(\ell)$. The claims in the present section are therefore not established.
For example, distinct spin and vortex length scales have been proposed in
frustrated Heisenberg magnets, a different model~\cite{Kawamura2011}.
If such a separation of scales occurs in the present model, the
discussion in this section could be invalidated.

\section{The main theorems and overview of the proof}
\label{sec:theorem}

\subsection{Statement of the theorem}

Let $G=(V,E)$ be any finite subgraph of the nearest-neighbor square lattice.
For $|b_e|\le\beta<\infty$, where $\beta\ge0$, consider the free-boundary measure
\begin{equation}\label{eq:model}
 \dd\mu_{G,b}(n)=\frac1{Z_{G,b}}
 \exp\!\left(\sum_{\{x,y\}\in E}b_{xy}(\mathbbm{n}_x\cdot \mathbbm{n}_y)^2\right)
 \prod_{x\in V}\dd\omega(\mathbbm{n}_x),\qquad \mathbbm{n}_x\in S^2,
\end{equation}
where $\omega$ is the uniform probability measure. 
This is a generalized version of the model~\eqref{eq:model_RP2} introduced in the Introduction,
with bond-dependent couplings $b_{xy}$, which can be positive or negative.
A convention with interaction
$P_2(\mathbbm{n}_x\cdot \mathbbm{n}_y)$, where
$P_2$ is the second-order Legendre polynomial, is obtained by rescaling the couplings by $3/2$.

We study the two-point function of the quadrupolar tensor~\eqref{eq:def_Q},
\begin{equation}
      \calC_Q^{G,b}(x,y)=\langle \tr(Q(\mathbbm{n}_x)Q(\mathbbm{n}_y))\rangle_{\mu_{G,b}} ,
\end{equation}
which is the generalization of Eq.~\eqref{eq:def_CQ} to the inhomogeneous model~\eqref{eq:model}.
The two-point function defined as $\langle P_2(\mathbbm{n}_x\cdot \mathbbm{n}_y) \rangle$
equals $(3/2)\calC_Q^{G,b}(x,y)$;
its bounds therefore have the same rates and a prefactor multiplied by $3/2$.
We use $r=\norm{x-y}_\infty$.  Rotation invariance gives
$\av{Q(\mathbbm{n}_x)}=0$.  Also
\begin{equation}\label{eq:Qfacts}
 Q(R \mathbbm{n})=RQ(\mathbbm{n})R^T,\qquad
 \frac1{12}\sum_{R\in\mathcal T}RQ(\mathbbm{n})R^T=0,\qquad
 \norm{Q(\mathbbm{n})}_{\mathrm F}^2=\tfrac23,
\end{equation}
where $\norm{Q}_{\mathrm F} = \sqrt{\tr{Q^TQ}}$ is the Frobenius norm. 
Here $\mathcal T$ is the group of twelve orientation-preserving signed
cyclic coordinate permutations: compose a cyclic permutation with one of
the four diagonal sign matrices of determinant $+1$. The sign average
kills off-diagonal entries of $\mathbbm{n}\mathbbm{n}^T$, and the cyclic average makes its
three diagonal entries equal to $1/3$. This proves the cancellation in
\eqref{eq:Qfacts} without Haar integration. This is a standard finite-group
averaging identity: the second moment of each tetrahedral orbit is
isotropic. In the terminology of spherical designs, these orbits are
spherical $2$-designs; see the general framework of
Delsarte, Goethals and Seidel~\cite{Delsarte1977}. We use the explicit
calculation above, rather than a design-existence theorem.
In particular, $\abs{\tr(Q(\mathbbm{n})Q(\mathbbm{m}))}\le2/3$ for any unit vectors $\mathbbm{n},\mathbbm{m}$.

\Needspace{13\baselineskip}
\begin{theorem}[Lean-verified log-square decay]
\label{thm:main}
For every finite $\beta\ge0$, there are constants $C_\beta,c_\beta>0$
such that, for every finite subgraph $G$ of the nearest-neighbor square-lattice graph, every
$b\in[-\beta,\beta]^E$ and every $x,y\in V$,
\begin{equation}
\label{eq:logsquare}
 |\calC_Q^{G,b}(x,y)|\le C_\beta
       \exp\{-c_\beta[\log(2+\|x-y\|_\infty)]^2\}.
\end{equation}
The constants are selected before the graph, couplings and observation
sites. The estimate holds at every distance, including coincident sites,
and passes with the same constants to the free-boundary local limits
specified in Corollary~\ref{cor:limits}.
\end{theorem}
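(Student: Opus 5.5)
The plan is a multiscale conditioning argument. The Mermin--Wagner-type ``spin-wave rotation'' gains a decorrelation factor at every dyadic scale, and the gains multiply. Fix $x,y$ with $r=\|x-y\|_\infty$ large; small $r$ is absorbed into $C_\beta$ using $\abs{\tr(Q(\nv)Q(\mathbbm{m}))}\le 2/3$. Around $x$ take dyadic annuli $A_k=\{z:2^{k}\le\|z-x\|_\infty<2^{k+1}\}$ for $k\le K\approx\log_2 r-1$; these separate $x$ from $y$. Write $\calC_Q^{G,b}(x,y)=\sum_{ij}\langle Q_{ij}(\nv_x)Q_{ij}(\nv_y)\rangle$ and condition on the spins outside the ball $B_{k}$ of radius $2^{k}$. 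The basic object is then the conditional ``memory'' of $Q(\nv_x)$,
\begin{equation*}
m_k:=\sup_{\text{boundary data outside }B_k}\bigl\|\langle Q(\nv_x)\mid \mathcal F_{B_k^c}\rangle\bigr\|_{\mathrm F}.
\end{equation*}
Since the graph is an arbitrary finite subgraph, the supremum must be taken over all subgraphs and all couplings in $[-\beta,\beta]$. The target is a recursion $m_{k+1}\le \theta_k\, m_k$ with $\theta_k\le \exp(-c\,k)$ for $k\ge k_0(\beta)$. Iterating it up to $k=K$ gives $\prod_k\theta_k\le \exp(-c'K^2)=\exp\{-c_\beta[\log(2+r)]^2\}$. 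The bound on $\calC_Q$ then follows because $\langle Q(\nv_x)Q(\nv_y)\rangle=\langle \langle Q(\nv_x)\mid\mathcal F_{B_K^c}\rangle Q(\nv_y)\rangle$.

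For one step $k\to k+1$ I would use a two-ingredient argument. The first ingredient is a Pfister/McBryan--Spencer rotation inside the annulus. Fix $R\in\SO$ and a path $R_t$ from $\mathbbm 1$ to $R$. Rotate the spins at $z\in A_k$ by $R_{\phi(z)}$, where $\phi$ interpolates from $1$ on the inner boundary to $0$ on the outer one. The interaction is a function of $(\nv_z\cdot\nv_w)^2$ only, so it is smooth and $\Z_2$-gauge invariant. Bounding the relative entropy (second-order Taylor) by the Dirichlet energy of $\phi$ shows that the conditional law of the inner region is close, in total variation, to its image under the global rotation $R$, with error $\lesssim \beta\sum_{e}|\nabla\phi|^2$. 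Instead of the logarithmic profile $\phi(z)=\log(2^{k+1}/\|z-x\|)/\log 2$, I would choose $\phi$ spread over the whole range of scales $j\le k$, so that its Dirichlet energy is $\lesssim 1/k$. This is where the factor growing with $k$ comes from. The second ingredient is the finite tetrahedral averaging \eqref{eq:Qfacts}, applied with $R$ ranging over the twelve elements of $\mathcal T$. Averaging the conditional expectation of $Q(\nv_x)$ over $R\in\mathcal T$ gives exactly $0$. Hence $\langle Q(\nv_x)\mid\mathcal F\rangle$ is bounded by the sum over $R\in\mathcal T$ of the total-variation errors, which is only $O(1/\sqrt k)$. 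That is not yet $e^{-ck}$. To upgrade it I would iterate the comparison on the $\sim k$ nested sub-annuli inside $A_k$. Each sub-annulus contributes an independent contraction $(1-\delta)$, because the Markov property makes successive conditionings compose. This gives a contraction of order $(1-\delta)^{k}$, i.e.\ $\theta_k\le e^{-ck}$. The constants $\delta,c$ depend only on $\beta$, which yields the uniformity in $G$, $b$, $x$, $y$ demanded by the theorem.

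The main obstacle is making the per-sub-annulus contraction genuinely multiplicative in the non-abelian setting. The rotation trick produces closeness in total variation for a \emph{fixed} group element, whereas the contraction must act on the vector-valued conditional mean and survive the supremum over boundary data. I would first try to formulate it as a Dobrushin-type contraction coefficient for the Markov kernel $\mathcal F_{B_{j+1}^c}\to \mathcal F_{B_j^c}$ restricted to the five-dimensional space spanned by $Q$. On that space $\mathcal T$ acts without invariant vectors, so a uniform spectral gap $1-\delta(\beta)$ is plausible. If this fails, the fallback is to accept the weaker per-scale bound $m_{k+1}\le (1-\delta/\sqrt{\cdots})m_k$ and recover the squared logarithm from the distribution of Dirichlet energy across scales. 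Finally, Corollary~\ref{cor:limits} follows since the bound is uniform in $G$ and passes through local limits.
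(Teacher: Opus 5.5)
Your architecture is the paper's. You use annuli separating $x$ from $y$, exact cancellation by the twelve rotations of $\mathcal T$ via \eqref{eq:Qfacts}, a fixed contraction per annulus composed through the Markov property, and about $k$ annuli of relative width $\asymp 1/k$ in the $k$-th dyadic band, giving $(\log r)^2$ contractions. The obstacle you flag, multiplicativity, is in fact the easy part. Once a pointwise comparison $(1-\varepsilon)K(g\sigma,\tau)\le K(\sigma,\tau)$ with $\varepsilon\le 1/2$ holds uniformly in the boundary arrays, the splitting $K=\tfrac12\overline K+\tfrac12K^{\rm rem}$ and the partition identity of Lemma~\ref{lem:normalization} make the factors $1/2$ multiply exactly, with no spectral-gap argument.

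The genuine gap is that uniform comparison across a \emph{single} annulus of relative width $\delta\asymp 1/k$ at radius $2^k$. Your first ingredient cannot supply it. The second-order Pfister/McBryan--Spencer bound controls the loss by $\beta$ times the Dirichlet energy of a profile going from $1$ to $0$ across that annulus. That energy is of order $1/\delta\asymp k$, not small. Even a full dyadic annulus costs order $\beta$, so it gives no contraction at low temperature.

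Your $1/k$ energy is obtained only by spreading $\phi$ over all scales $j\le k$. That profile lives in the whole ball, so it cannot be the step-$k$ factor of a recursion in which $m_k$ already takes the supremum over data outside $B_k$. It certainly cannot be reused on $k$ sub-annuli of $A_k$. More fundamentally, such estimates use only $\mathcal R(v,\overline v)\le 4\beta\norm v_2^2$ for one-axis rotations. They are blind to non-commutativity and are equally available for planar rotator models, where power-law decay genuinely occurs. The best any packing of them yields is annuli of logarithmic width $\asymp\beta$ and a power law $r^{-c/\beta}$. A ``plausible'' uniform spectral gap on the $Q$-sector just restates the missing estimate, and redistributing Dirichlet energy across scales cannot beat the power law.

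The missing idea is the fourth-order noncommuting-rotation inequality (Lemma~\ref{lem:family}). The rotation about $c$ is produced through $[a,b]=c$ by complex Fourier modes in the frequency blocks $2^j$, $h\le j<n$, all supported in the thin annulus (Lemma~\ref{lem:modes}). Their commutator field has amplitude $\lambda_{n,h}\ge n/4$, while the family cost is only $O(n/\delta)$. Keeping track of the spatial support of the modes, rather than using operator norms alone, saves one inverse power of $\delta$ here. This gives $-Z''/Z\le K_\beta/(\delta n)+V_\beta/(\delta^6 2^n)$ and the error $\varepsilon_{n,h}=E_\beta 2^h/(n-h)+F_\beta 2^{6h}/2^n$ of Proposition~\ref{prop:thin}. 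The decrease of this error with the scale $n$ is what permits $\delta\asymp_\beta 1/n$, and hence order $n$ contracting annuli per band. Without an estimate of this kind, your $\theta_k\le e^{-ck}$ has no support.
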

The all-temperature envelope \eqref{eq:logsquare} is faster than every
power, but weaker than $\exp(-c r^\alpha)$ for every fixed $\alpha>0$.

\begin{theorem}[Lean-verified high-temperature exponential decay]\label{thm:highT}
If $0\le\beta<\log(3/2)$, set
\begin{equation}\label{eq:highmass}
 m_\beta=-\log\!\left(\max\left\{\tfrac12,
                              3(1-e^{-\beta})\right\}\right)>0.
\end{equation}
For the same graphs, couplings and sites,
\begin{equation}\label{eq:highexp}
 |\calC_Q^{G,b}(x,y)|\le\frac89 e^{-m_\beta\|x-y\|_\infty}.
\end{equation}
The mass and prefactor are unchanged in the stated free-boundary local
limits. 
\end{theorem}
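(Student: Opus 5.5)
We prove Theorem~\ref{thm:highT} by a high-temperature (Fortuin--Kasteleyn-type) random-cluster expansion combined with the averaging identity \eqref{eq:Qfacts}. First we shift each bond weight by a constant, which cancels in the normalization, so that it becomes a nonnegative quantity bounded by $1$. For $b_{xy}\ge0$ we write
\begin{equation*}
e^{b_{xy}(\nv_x\cdot\nv_y)^2}=e^{b_{xy}}\bigl[e^{-b_{xy}}+(1-e^{-b_{xy}})\,\phi_{xy}\bigr],
\qquad 0\le\phi_{xy}\le1,
\end{equation*}
where $\phi_{xy}=\bigl(e^{b_{xy}(\nv_x\cdot\nv_y)^2}-1\bigr)/(e^{b_{xy}}-1)$. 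For $b_{xy}<0$ we factor out $e^{b_{xy}}$ instead, which gives the same form with $1-e^{-|b_{xy}|}$. In both cases each bond is ``open'' with weight at most $p:=1-e^{-\beta}$ times a function bounded by $1$, and ``closed'' otherwise with a constant weight. Expanding the product over bonds writes $Z_{G,b}$ and the unnormalized correlation as sums over bond subsets $A\subseteq E$. Every term has a nonnegative weight $w(A)=\prod_{e\in A}p_e\,\E_\omega[\prod_{e\in A}\phi_e\cdots]$.

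The key step is a cancellation. Suppose $x$ and $y$ lie in different connected components of $(V,A)$. Then the integrand factorizes over the components. Rotating all spins of the component containing $x$ by $R\in\mathcal T$ leaves the weight invariant, because the bond functions depend only on $\nv_u\cdot\nv_v$ and $\omega$ is rotation invariant. It also replaces $Q(\nv_x)$ by $RQ(\nv_x)R^T$. Averaging over $\mathcal T$ and using \eqref{eq:Qfacts}, or simply Haar invariance, these terms vanish. Hence only subsets $A$ containing an open path from $x$ to $y$ contribute. Using $\abs{\tr(QQ)}\le2/3$, we obtain
\begin{equation*}
\abs{\calC_Q^{G,b}(x,y)}\le\tfrac23\,\frac{\sum_{A:\,x\leftrightarrow y}w(A)}{\sum_A w(A)}.
\end{equation*}

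The main obstacle is bounding this ratio by a path count. The difficulty is that the weights $w(A)$ are not a product measure, since the spin integrals couple the bonds. Our plan is to sum over self-avoiding paths $\gamma$ from $x$ to $y$ of length $k\ge r$. For each such path we bound the total weight of configurations with $\gamma\subseteq A$ by the weight obtained when the bonds of $\gamma$ are forced open. Deleting the bonds of $\gamma$ from $A$ produces an injection into all configurations with a controlled weight loss.

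To control that loss, we use the constant-weight closed alternative: each closed bond carries weight $e^{-b}\ge e^{-\beta}$ relative to the normalization, while each open bond carries at most $1-e^{-\beta}$ times a factor bounded by $1$. The per-bond ratio is therefore at most $(1-e^{-\beta})\cdot(\text{const})$. We must verify that the constant works out to yield the mass
\begin{equation*}
\max\{1/2,\;3(1-e^{-\beta})\},
\end{equation*}
where the factor $3$ is the number of non-backtracking continuations on $\Z^2$. The resulting bound is $\sum_{k\ge r}3^k p^k$ up to a constant. The choice $\max\{1/2,\cdot\}$ makes the geometric tail sum to at most $2$, or $4/3$ after refining the first step, times $2/3$. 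This gives the prefactor $8/9=\tfrac23\cdot\tfrac43$.

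We finish with two checks. First, we confirm that $3(1-e^{-\beta})<1$ exactly when $\beta<\log(3/2)$. Second, we pass to local limits: the bound is uniform in $G$ and $b$, so it survives weak limits of the free-boundary measures.
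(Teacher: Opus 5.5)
Your architecture matches the paper's ``positive active-edge expansion and nonbacktracking path argument'': the bond expansion, the $\mathcal T$-cancellation of every term in which $x$ and $y$ lie in different clusters, and a path count on $\Z^2$. The expansion and cancellation are correct once the closed factors $\prod_{e\notin A}(1-p_e)$ are kept in $w(A)$. The gap is the quantitative path estimate, where the ``const'' you left open does not work out.

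The injection $A\mapsto A\setminus\gamma$ compares a configuration with $\gamma$ open to one with $\gamma$ closed. So each bond of $\gamma$ costs $p_e\phi_e/(1-p_e)\le(1-e^{-\beta})/e^{-\beta}=e^{\beta}-1$, i.e.\ your const is $e^{\beta}$. With $3(e^{\beta}-1)$ in place of $3(1-e^{-\beta})$ you get decay only for $\beta<\log(4/3)$, with a worse mass, so the stated range and $m_\beta$ are not reached. The missing idea is to compare pointwise in the spins, before integrating them out. Since $\phi_e\le1$, one has $p_e\phi_e\le p_e\,[(1-p_e)+p_e\phi_e]$ for every $\nv$, hence
\[
 \sum_{A\supseteq\gamma}w(A)\le\prod_{e\in\gamma}p_e\,\sum_{A}w(A).
\]
Equivalently, conditionally on the spins the bonds are independent, each open with probability $p_e\phi_e/(1-p_e+p_e\phi_e)\le p_e$. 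So the non-product structure of $w$, which you flagged as the main obstacle, costs nothing.

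The prefactor accounting also fails. Summing over connecting paths of every length $k\ge r$ gives, with $p=1-e^{-\beta}$,
\[
 \tfrac43\sum_{k\ge r}(3p)^k=\tfrac43\,\frac{(3p)^r}{1-3p}.
\]
The factor $1/(1-3p)$ is unbounded as $\beta\uparrow\log(3/2)$. Inserting $\max\{1/2,3p\}$ does not help, since the tail is bounded by $2\max\{1/2,3p\}^r$ only when $3p\le1/2$. So no $\beta$-uniform prefactor, in particular not $8/9$, can come from a sum over lengths.

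The factor $4/3$ is not a tail sum. It is the count $4\cdot3^{r-1}=\tfrac43\,3^r$ of nonbacktracking paths of length exactly $r$. The correct step is to truncate. Any open self-avoiding path from $x$ to $y$ has length at least $\|x-y\|_1\ge r$, so its first $r$ edges form an open self-avoiding path of length exactly $r$ starting at $x$. A union bound over these at most $4\cdot3^{r-1}$ prefixes, combined with the pointwise estimate above, gives for $r\ge1$
\[
 |\calC_Q^{G,b}(x,y)|\le\tfrac23\cdot\tfrac43\,(3p)^r=\tfrac89(3p)^r\le\tfrac89e^{-m_\beta r}.
\]
The case $r=0$ is immediate from $\tfrac23\le\tfrac89$. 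The $\max\{1/2,\cdot\}$ in $m_\beta$ only keeps the mass finite, e.g.\ at $\beta=0$. This is what the paper's direct path-prefix bound records.
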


The stronger, exponential-decay bound is thus established,
but only at sufficiently high temperatures.

\subsection{The \texorpdfstring{$\RP^3$}{RP3} model}
\label{sec:theoremRP3}

The same results hold for the projective model with four-component
spins. Replace $\mathbbm{n}_x\in S^2$ by a unit vector
$\mathbbm{q}_x\in S^3\subset\R^4$ and consider, on the same graphs and
with the same couplings, the free-boundary measure
\begin{equation}\label{eq:modelRP3}
 \dd\mu^{(4)}_{G,b}(\mathbbm{q})=\frac1{Z^{(4)}_{G,b}}
 \exp\!\left(\sum_{\{x,y\}\in E}b_{xy}(\mathbbm{q}_x\cdot\mathbbm{q}_y)^2\right)
 \prod_{x\in V}\dd\omega_3(\mathbbm{q}_x),
\end{equation}
where $\omega_3$ is the uniform probability measure on $S^3$. The weight is
invariant under $\mathbbm{q}_x\to-\mathbbm{q}_x$ at each site, so the spins
are effectively elements of $S^3/\{\pm1\}=\RP^3$. The quadrupolar tensor
and its correlation function are
\begin{equation}\label{eq:defQRP3}
 Q_4(\mathbbm{q})=\mathbbm{q}\mathbbm{q}^T-\tfrac14\mathbbm{1}_4,\qquad
 \calC_{(4)}^{G,b}(x,y)=\av{\tr[Q_4(\mathbbm{q}_x)Q_4(\mathbbm{q}_y)]}
 =\av{(\mathbbm{q}_x\cdot\mathbbm{q}_y)^2-\tfrac14},
\end{equation}
which satisfies $|\tr[Q_4(\mathbbm{q})Q_4(\mathbbm{q}')]|\le3/4$.

\begin{theorem}[Lean-verified decay in the $\RP^3$ model]\label{thm:RP3}
For every finite $\beta\ge0$, there are constants
$C^{(4)}_\beta,c^{(4)}_\beta>0$ such that, for every finite subgraph $G$ of
the nearest-neighbor square-lattice graph, every $b\in[-\beta,\beta]^E$ and
every $x,y\in V$,
\begin{equation}\label{eq:logsquareRP3}
 |\calC_{(4)}^{G,b}(x,y)|\le C^{(4)}_\beta
       \exp\{-c^{(4)}_\beta[\log(2+\|x-y\|_\infty)]^2\}.
\end{equation}
If moreover $0\le\beta<\log(3/2)$, then
\begin{equation}\label{eq:highexpRP3}
 |\calC_{(4)}^{G,b}(x,y)|\le e^{-m_\beta\|x-y\|_\infty},
\end{equation}
with the same mass $m_\beta$ as in \eqref{eq:highmass}. Both bounds pass
with the same constants to the free-boundary local limits of these
measures, defined as in Corollary~\ref{cor:limits}.
\end{theorem}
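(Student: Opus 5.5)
The plan is to transfer the $\RP^2$ argument (Theorems~\ref{thm:main} and \ref{thm:highT}) verbatim to four-component spins, isolating the few places where the target dimension enters. The $\RP^2$ proof uses only four structural inputs. First, a compact symmetry group acts on the single-site spin space, preserves the a priori measure and the interaction, and averages the observable to zero; here $\mathrm{SO}(4)$ acts on $S^3$ with $Q_4(R\mathbbm{q})=RQ_4(\mathbbm{q})R^T$. Second, the observable is bounded, $|\tr Q_4Q_4'|\le 3/4$. Third, the interaction is bounded, $|b_{xy}(\mathbbm{q}_x\cdot\mathbbm{q}_y)^2|\le\beta$. Fourth, some finite-group replacement exists for Haar averaging, of the kind given for $\RP^2$ by the tetrahedral group $\mathcal T$.

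For the last input I would use the group of signed permutation matrices of determinant $+1$ in dimension four, or its subgroup generated by the four-cycle and the even sign flips. Sign averaging kills off-diagonal entries of $\mathbbm{q}\mathbbm{q}^T$, and the cyclic average equalizes the diagonal entries to $1/4$. This gives $\frac1{|\mathcal G|}\sum_{R}RQ_4(\mathbbm{q})R^T=0$, the analogue of \eqref{eq:Qfacts}. I would check in the Appendix proof that nothing else about $\mathcal T$, such as its order $12$, enters the constants except through a harmless change of $C_\beta,c_\beta$.

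\textbf{Log-square bound.} I would rerun the multiscale argument (the Appendices on scales) with $\mathrm{SO}(3)$ replaced by $\mathrm{SO}(4)$. In each scale annulus one conditions on the outside and applies rotation perturbations that interpolate between the identity at the inner boundary and a group element $R$ at the outer boundary. The energy cost of a slowly varying rotation field is bounded by $\beta$ times the sum of squared gradients. This uses only Lipschitz control of $\mathbbm{q}\mapsto(\mathbbm{q}\cdot\mathbbm{q}')^2$ under rotations and a path in $\mathrm{SO}(4)$ from $1$ to each $R\in\mathcal G$ of bounded length; both exist because $\mathrm{SO}(4)$ is connected. Averaging over $R\in\mathcal G$ kills the correlation up to an error. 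Iterating over $\sim\log r$ dyadic scales, each of which gains a fixed factor, gives $\exp(-c[\log r]^2)$, exactly as before.

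\textbf{High-temperature bound.} I expect this to follow from a disagreement-percolation or coupling argument. A bond is open with probability at most $1-e^{-\beta}$ after writing $e^{b(\cdot)^2}$ as $e^{-\beta}$ times a constant plus a positive remainder, and the combinatorial factor $3$ per step counts non-backtracking directions on $\Z^2$. Both ingredients are independent of the spin dimension, so the same $m_\beta$ results. The prefactor changes from $\frac89$ to a constant at most $1$; I would verify that $2\cdot(3/4)\cdot(\text{path-sum constant})\le1$, or else identify the exact prefactor origin in Theorem~\ref{thm:highT}.

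\textbf{Limits and main obstacle.} The local limits follow as in Corollary~\ref{cor:limits}, since the bounds are uniform in $G$. The main obstacle I foresee is the quantitative per-scale gain. There one needs the relative entropy, or Radon--Nikodym cost, of the rotation perturbation to be bounded uniformly, and the path lengths in $\mathrm{SO}(4)$ for the chosen finite group, together with the Hessian bound of the interaction along rotations, enter $c^{(4)}_\beta$. I would first try to bound these by embedding the needed rotations into products of planar rotations in coordinate $2$-planes, each of angle $\le\pi$, which reduces the computation to the one already done for $\RP^2$.
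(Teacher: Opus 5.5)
Your top-level plan, rerunning the $\RP^2$ proof with a new finite averaging group, is the paper's. However, the log-square argument you actually describe does not yield the theorem.

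\textbf{The log-square step.} First, the count is off: a fixed contraction factor on each of $\sim\log r$ dyadic scales gives $e^{-c\log r}=r^{-c}$, a power law. The rate $\exp\{-c[\log r]^2\}$ needs $\asymp n$ separate contractions inside the single band $[2^n,2^{n+1}]$. The paper obtains them by packing $m_n\ge n/(4A)$ disjoint annuli of relative width $\delta=2^{-h}\asymp 1/n$ into that band (Appendix~\ref{app:thinpacking}).

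Second, the input you propose cannot make such thin annuli contract. The second-order cost $\beta\sum|\nabla\theta|^2$ of a spin-wave rotation through a fixed angle across relative width $\delta$ is of order $\beta/\delta\asymp\beta n$, which is large. Even at aspect ratio $2$ it is only $O(\beta)$, so with this input a uniform comparison error below $1/2$ needs $\log(S/R)\gtrsim_\beta 1$. That is the Mermin--Wagner/McBryan--Spencer regime, which gives at best a power law.

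The missing idea is the fourth-order noncommuting-rotation inequality, Lemma~\ref{lem:family}, applied to the support-tracked Fourier family of Lemma~\ref{lem:modes}. The amplification $\lambda_{n,h}\ge n/4$, set against a family cost $O(n/\delta)$, gives $-Z''/Z\le K_\beta/(\delta n)+V_\beta/(\delta^6 N)$ in Proposition~\ref{prop:thin}, which is small at $\delta\asymp 1/n$.

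You also need a mechanism that makes the per-annulus gains multiply. Conditioning on the outside and ``killing the correlation up to an error'' does not provide one. The paper uses the exact splitting $K=\tfrac12\overline K+\tfrac12K^{\rm rem}$ together with the partition identity of Lemma~\ref{lem:normalization}, which gives the factor $2^{-m}$ exactly, without any independence.

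\textbf{What the $\RP^3$ transfer must change.} This is more than connectivity of $\mathrm{SO}(4)$ and bounded path lengths. Lemma~\ref{lem:family} uses three generators with $[a,b]=c$, in the gauge-logarithm and half-turn steps. The paper takes left multiplication by $i/2,j/2,k/2$ on $S^3$, whose paths have period $4\pi$; this period enters the periodic-minimum step. Proposition~\ref{prop:thin} is therefore obtained for left multiplications by unit quaternions. The averaging group is chosen inside this set: $Q_8$ acting on the left, where orthonormality of $\mathbbm{q},i\mathbbm{q},j\mathbbm{q},k\mathbbm{q}$ makes the orbit average of $Q_4$ vanish.

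Your group choices each run into a problem:
\begin{itemize}
\item Your signed-permutation group contains elements like $\diag(-1,-1,1,1)$, which fix nonzero vectors and hence are not left multiplications. To use it you must extend the comparison, e.g.\ to right multiplications by conjugating with quaternion conjugation (a simultaneous $\mathrm{O}(4)$ symmetry of the kernel) and then composing. Your plan does not identify this step.
\item The plain four-cycle permutation matrix has determinant $-1$, so no rotation path reaches it. Unlike $\RP^2$, the projective gauge does not rescue it, because $-\mathbbm{1}_4\in\mathrm{SO}(4)$.
\item A single triple of coordinate-plane rotations generates an $\mathrm{SO}(3)$ fixing an axis, and no finite subgroup of that averages $Q_4$ to zero.
\end{itemize}

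\textbf{High temperature.} Your sketch matches the paper's positive active-edge expansion with nonbacktracking paths. Its prefactor is $\tfrac34\cdot\tfrac43=1$, versus $\tfrac23\cdot\tfrac43=\tfrac89$ for $\RP^2$, so the extra factor $2$ in your proposed check has no counterpart.
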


\subsection{Proof strategy and historical context}
\label{sec:ancestry}

The main bound comes from accumulating fixed contractions of the
quadrupolar correlation across nested square annuli. A fourth-order
response inequality for noncommuting spin rotations
(Appendix~\ref{app:input}), combined with Fourier test functions whose
spatial support is retained in the estimates
(Appendix~\ref{app:fourier}), controls the change in an annular partition
kernel under relative rotations of its boundaries. At sufficiently large
radius $R$, annuli of width proportional to $R/\log R$, with a
$\beta$-dependent proportionality constant, have relative comparison error
at most $1/2$, uniformly in the spins pinned on both boundaries
(Appendix~\ref{app:annulus}). Each kernel can therefore be written as an
equally weighted mixture of its rotational average and a nonnegative
remainder. An averaged annulus separating the observation sites
annihilates the quadrupolar numerator by \eqref{eq:Qfacts}, while replacing
all selected kernels by their remainders preserves the partition
function exactly. Thus each annulus contributes a factor $1/2$ to the
correlation bound, without an independence assumption
(Appendix~\ref{app:averaging}). Order $\log R$ disjoint annuli fit within
each dyadic band of radii; summing over bands up to the separation
$r=\|x-y\|_\infty$ gives order $(\log r)^2$ contractions. Their product
yields \eqref{eq:logsquare}, with the bounded-distance regime absorbed
into the prefactor (Appendix~\ref{app:thinpacking}). No restriction on
projective defects is imposed. Appendix~\ref{app:scales} passes the bound
to the specified local limits and derives weighted summability and
Fourier smoothness.

The proof builds on established rotation and averaging methods
\cite{DobrushinShlosman1975,Shlosman1978}. Its direct analytic input is the
fourth-order noncommuting-rotation inequality and multiscale Fourier
construction of Ref.~\cite[Sections 2--3]{OpenAI2026}, adapted here to the
squared interaction, signed couplings, and arbitrary pinned arrays.
The change-of-variables identities belong to the Ward-identity framework
\cite{Driessler1979,AizenmanSimon1980}; products of contraction factors
from nested rotation layers occur in
Ref.~\cite[Section 2.1]{Ioffe2002}. The positive kernel decomposition is
an elementary splitting construction, related in form to
Ref.~\cite{Nummelin1978}. Complex-rotation arguments provide another
route to algebraic bounds \cite{McBryan1977,Gagnebin2014}; the present
proof uses real changes of spin variables, with complex multilinear
test functions. The essential additional estimate keeps track of the
spatial support of the Fourier modes in a thin annulus. For
$R=2^n$ and relative width $\delta$, it bounds their squared family
cost by $O(n/\delta)$ and gives a rotational comparison error
\[
 O_\beta\!\left(\frac{1}{\delta n}\right)
 +O_\beta\!\left(\frac{1}{\delta^6 2^n}\right).
\]
This permits $\delta\asymp_\beta n^{-1}$ and hence the increasing number
of contractions per dyadic band. To our knowledge, this specific
width-dependent estimate and its application to log-square quadrupolar
decay in the $\RP^2$ model have not appeared in the literature we have
identified. Fourier orthogonality and the matrix-norm inequalities used
to obtain the estimate are standard.

The proof of Theorem~\ref{thm:RP3} follows that of Theorems~\ref{thm:main} and \ref{thm:highT},
with one change in the averaging step: the twelve tetrahedral rotations
are replaced by left multiplication by the quaternion group
$Q_8=\{\pm1,\pm i,\pm j,\pm k\}$. For any unit $\mathbbm{q}$, the four
vectors $\mathbbm{q},i\mathbbm{q},j\mathbbm{q},k\mathbbm{q}$ form an
orthonormal basis of $\R^4$, so the average of $Q_4$ over the orbit
vanishes, which is the analogue of \eqref{eq:Qfacts}. The rotational
estimate uses left multiplication by $i/2$, $j/2$ and $k/2$ as
noncommuting generators. The high-temperature bound has prefactor $1$
rather than $8/9$ because the single-site bound is $3/4$ rather than
$2/3$. The constants $C^{(4)}_\beta$ and $c^{(4)}_\beta$ are those of the
$\RP^3$ formalization and are not claimed to coincide with $C_\beta$
and $c_\beta$.

Through the double cover $S^3\to\SO$, $\mathbbm{q}\mapsto R(\mathbbm{q})$,
one has
$(\mathbbm{q}\cdot\mathbbm{q}')^2-\tfrac14
 =\tfrac14\tr[R(\mathbbm{q})^TR(\mathbbm{q}')]$.
The model~\eqref{eq:modelRP3} is therefore the isotropic $\SO$ matrix model
with nearest-neighbor interaction $(b_{xy}/4)\tr(R_x^TR_y)$, and
Theorem~\ref{thm:RP3} bounds its order-parameter correlation
$\av{\tr(R_x^TR_y)}$. This identification uses the standard fact that the
uniform measure on $S^3$ is mapped to the Haar measure on $\SO$. That fact
is not formalized in the present release, but it is not needed for
Theorem~\ref{thm:RP3} as stated. As discussed in
Sec.~\ref{sec:historyRP2:other}, this isotropic model is not identical to
the effective theory of the triangular-lattice Heisenberg antiferromagnet,
whose stiffnesses are in general unequal.

\section{Formalization and verification record}
\label{sec:formalization}

The Lean 4 formalization accompanying this manuscript is archived at
Zenodo~\cite{RP2Lean2026}. Its source identity, mathematical correspondence
and verification record are given below.

\subsection{Source identity and mathematical correspondence}

The archived formalization~\cite{RP2Lean2026} is the file \texttt{rp2-rp3-lean-zenodo-2026-10-21.zip},
containing the \texttt{rp2-lean} version 0.4.0 sources and the additional
$\mathbb{RP}^{3}$ proof checked on 12 October 2026 (UTC).
The Lake package version remains 0.4.0; the dated revision and source
hashes below identify the extension. All 85 original $\mathbb{RP}^{2}$
modules, their public entry point and their standalone source are
retained byte-for-byte. The public entry points are \texttt{import RP2}
and \texttt{import RP3}.

Both standalone sources are generated from the corresponding modular
sources. The supplied $\mathbb{RP}^{3}$ standalone file is byte-for-byte
identical to the regenerated file. Their SHA-256 hashes are:
\begin{center}\footnotesize
\begin{tabular}{@{}ll@{}}
\toprule
Source & SHA-256, continued on the following line\\
\midrule
\texttt{RP2-Formalization.lean} &
\texttt{843e4759583ab05a60e0b5d25cd36fe7}\\
& \texttt{a3df432f0411447bdb44a15ebc26b5ad}\\[3pt]
\texttt{RP3-Formalization.lean} &
\texttt{f0b6d06374f5565842bdf30620a715f1}\\
& \texttt{e854dbe992c5eeb5f850bdac034f43d9}\\
\bottomrule
\end{tabular}
\end{center}
For each file, concatenating its two lines gives the full hash.
Dependencies remain pinned to Lean 4.34.1 and Mathlib commit
\begin{center}\footnotesize\ttfamily
d13f23b723b8a846827a245b89c10fc7d3f11612
\end{center}
Both standalone files depend only on the pinned Mathlib environment,
with no project-module imports. The $\mathbb{RP}^{2}$ file imports
\texttt{Mathlib}; the $\mathbb{RP}^{3}$ file uses 21 targeted Mathlib
imports.

The following declarations are present in both namespaces. Each name
in this table is to be prefixed by \texttt{RP2.} or \texttt{RP3.},
as appropriate.
\begin{center}\footnotesize
\renewcommand{\arraystretch}{1.18}
\begin{tabular}{@{}>{\raggedright\arraybackslash}p{.27\textwidth}>{\raggedright\arraybackslash}p{.68\textwidth}@{}}
\toprule
Mathematical statement & Lean declaration(s), after the namespace prefix\\
\midrule
All-temperature log-square bound &
\nolinkurl{log_square_quadrupolar_decay_signed}\\
High-temperature exponential bound &
\nolinkurl{elementary_exponential_quadrupolar_decay_signed}\\
Direct path-prefix bound &
\nolinkurl{nonbacktracking_quadrupolar_bound_signed}\\
Arbitrary-power and factorial bounds &
\nolinkurl{superpolynomial_quadrupolar_decay_signed}\par
\nolinkurl{factorial_quadrupolar_decay_signed}\\
Family inequality &
\nolinkurl{Quadratic.SpinSystem.rotation_family_bound}\\
Actual thin-annulus response &
\nolinkurl{Fourier.LatticeGraph.thin_annulus_response}\\
Thin-kernel comparison &
\nolinkurl{Fourier.LatticeGraph.exists_thinKernelComparison_signed}\\
Exact normalization, one replacement &
\nolinkurl{AveragingSystem.partition_replace}\\
Product bound &
\nolinkurl{AveragingSystem.correlation_bound}\\
Local-limit bounds &
\nolinkurl{log_square_quadrupolar_decay_local_limit_signed}\par
\nolinkurl{elementary_exponential_quadrupolar_decay_local_limit_signed}\\
Absolute moments &
\nolinkurl{weighted_summability_local_limit_signed}\\
\bottomrule
\end{tabular}
\end{center}

The following additional declarations are specific to the
$\mathbb{RP}^{2}$ namespace in this release; all names in this table
have the prefix \texttt{RP2.}
\begin{center}\footnotesize
\renewcommand{\arraystretch}{1.18}
\begin{tabular}{@{}>{\raggedright\arraybackslash}p{.27\textwidth}>{\raggedright\arraybackslash}p{.68\textwidth}@{}}
\toprule
Mathematical statement & Lean declaration(s), after \texttt{RP2.}\\
\midrule
Two-endpoint factorial refinement &
\nolinkurl{two_endpoint_factorial_quadrupolar_decay_signed}\\
Paper interface for the family inequality &
\nolinkurl{Paper.family_bound}\\
Tetrahedral tensor cancellation &
\nolinkurl{tetraAction_average_Q_zero}\\
Simultaneous normalization identity &
\nolinkurl{AveragingSystem.partition_replace_all}\\
Smooth Fourier series &
\nolinkurl{contDiff_quadrupolarStructureFactor_local_limit_signed}\par
\nolinkurl{iteratedFDeriv_quadrupolarStructureFactor_local_limit_signed}\\
\bottomrule
\end{tabular}
\end{center}

The $\mathbb{RP}^{3}$ development adds the following declarations, with
prefix \texttt{RP3.}
\begin{center}\footnotesize
\renewcommand{\arraystretch}{1.18}
\begin{tabular}{@{}>{\raggedright\arraybackslash}p{.27\textwidth}>{\raggedright\arraybackslash}p{.68\textwidth}@{}}
\toprule
Mathematical statement & Lean declaration(s), after \texttt{RP3.}\\
\midrule
Quaternion cancellation &
\nolinkurl{quaternionAction_sum_qPair_zero}\\
Strict positivity at distinct sites &
\nolinkurl{ModelWitnesses.exists_distinct_sites_positive_correlation}\\
Local-limit existence &
\nolinkurl{exists_signed_local_limit_of_sequence}\par
\nolinkurl{exists_signed_free_boundary_local_limit}\par
\nolinkurl{exists_free_boundary_local_limit}\\
An actual local limit satisfying the bound &
\nolinkurl{local_limit_endpoint_nonvacuous}\\
Quaternion rotation in $\mathrm{SO}(3)$ &
\nolinkurl{SO3.rotOf_mem_SO3}\\
Trace and Gibbs-integral identities &
\nolinkurl{SO3.qPair_eq_trace}\par
\nolinkurl{SO3.correlation_eq_so3}\\
Matrix-form log-square bound &
\nolinkurl{SO3.so3_log_square_decay_signed}\\
\bottomrule
\end{tabular}
\end{center}

The formal observables are exactly
$(\boldsymbol n_x\cdot\boldsymbol n_y)^2-1/3$ on $S^2$ and
$(q_x\cdot q_y)^2-1/4$ on $S^3$, respectively. Both use normalized
spherical product measure and the squared interaction in
\eqref{eq:model}, with the spin dimension changed in the second case.
Independent spin sign changes leave both the interaction and the
observable unchanged, giving the spherical realization of projective
spins. The Gibbs denominator is proved strictly positive. The formal
radius is exactly the integer lattice $\ell^\infty$ distance. The
decay constants precede the finite retained square-lattice graph,
couplings and observation sites in the quantifiers; the signed theorems
allow every coupling array satisfying $|b_e|\le\beta$.

The generic \texttt{AnnularData} interfaces remain in the package, but
the final endpoints construct their analytic and geometric data. In
both developments, the log-square theorem obtains its
\texttt{SignedThinDyadicKernelComparison} from the proved response of
the actual model. No thin-annulus estimate, factorization certificate
or decay assumption remains as a hypothesis.

The spatial argument uses the verified dyadic frequency blocks,
thin-annulus estimates and exact positive-kernel averaging. Its
target-dependent cancellation uses twelve tetrahedral rotations for
$\mathbb{RP}^{2}$ and the quaternion group
$Q_8=\{\pm1,\pm i,\pm j,\pm k\}$ for $\mathbb{RP}^{3}$.
The latter proof works directly on $S^3$, using left multiplication by
$i/2$, $j/2$ and $k/2$ as infinitesimal generators. Its rotation paths
have period $4\pi$, and the comparison constants incorporate that
period. Appendix~\ref{app:family} gives a gauge-logarithm exposition of
the fourth-order method; the Lean developments establish the required
identities by explicit mixed derivatives and polarization. The formal
certification is of the family inequalities and final Gibbs-integral
statements, rather than a line-by-line transcription of that exposition.

The $\mathbb{RP}^{3}$ argument also includes the positive
active-edge expansion and nonbacktracking path argument. For
$0\le\beta<\log(3/2)$ it proves an exponential bound with prefactor $1$
and mass
\[
m_\beta=-\log\!\left(\max\!\left\{\tfrac12,\,
                 3(1-e^{-\beta})\right\}\right)>0.
\]
The corresponding $\mathbb{RP}^{2}$ bound has prefactor $8/9$ and the
same mass. Both bounds pass to the specified free-boundary local limits.

The $\mathrm{SO}(3)$ declarations prove
\[
(q\cdot q')^2-\tfrac14
 =\tfrac14\operatorname{Tr}\!\left(R(q)^{\mathsf T}R(q')\right)
\]
and rewrite the Gibbs quotient accordingly, with trace coupling $b_e/4$.
These formal integrals still use the spherical reference measure.
The auxiliary declaration
\nolinkurl{RP3.SO3.haarInvariance_of_surjective}
proves invariance conditional on surjectivity of the quaternion rotation
map. Surjectivity and the complete Haar-measure pushforward
identification are not formalized unconditionally in this release.
Neither is a premise of the $\mathbb{RP}^{3}$ decay theorems.

\subsection{Checks and reproducibility}

The verification records for the two targets are kept distinct.
The separate $\mathbb{RP}^{2}$ audit of 9 October 2026 compiled all
86 reachable project modules, regenerated the 16,493-line standalone
file exactly and compiled it separately. Its 69 mandatory public
theorem roots passed a raw dependency audit covering declaration types
and proof/definition bodies. Their union contained 53,315 declarations,
all of which replayed successfully in an empty kernel environment.
The newly compiled $\mathbb{RP}^{2}$ standalone object also passed
kernel replay. The supplier's earlier count of 53,300 differs through
generated auxiliary declarations; the public roots and recorded source
hashes agree. The count 53,315 refers to the separate audit, not to a
new combined-target replay.

For the revised $\mathbb{RP}^{3}$ development, the verification run of
12 October 2026 (UTC) compiled all 83 reachable project modules,
regenerated the 15,635-line standalone file and compiled it separately.
All 20 public theorem roots passed the raw dependency audit. Their
union of 53,422 declarations replayed successfully in a fresh Lean
kernel environment. The $\mathbb{RP}^{3}$ standalone compilation is
recorded separately; no additional replay of that standalone object
is claimed. These dependency counts describe separate closures and
are not additive.

Both verification records used CLI-default asynchronous elaboration
with one worker and checked all nine dependency revisions and clean
checkout states against the pins. The only axioms in the audited
public closures were \texttt{propext}, \texttt{Classical.choice} and
\texttt{Quot.sound}; no unsafe or partial proof dependency was found.
The mathematical sources contain no \texttt{sorry}, \texttt{admit} or
added axiom.

With the pinned toolchain, the $\mathbb{RP}^{2}$ release check is
\begin{verbatim}
python3 scripts/verify.py --fetch-cache --standalone \
 --replay --replay-standalone --inventory
\end{verbatim}
The second line continues the same shell command. To reproduce the
$\mathbb{RP}^{3}$ compilation, semantic checks and public-closure
replay, run
\begin{verbatim}
python3 scripts/verify_rp3.py --fetch-cache --standalone --replay
\end{verbatim}
Again, this is one shell command. It checks the delivered semantic
baseline and writes new records to
\nolinkurl{verification/rp3-latest/}.
The flag \texttt{--write-semantic-baseline} is reserved for an
intentional, separately reviewed baseline update and is not part of
ordinary reproduction.

\section{Methodology and AI-use statement}
\label{sec:method}

The human author (MO) put the question to AI (ChatGPT 6 Astra), supplying the
OpenAI release (the paper in PDF and Lean formalization on GitHub)
on the exponential decay of correlations in the $O(n)$ model \cite{OpenAI2026}.
The proof was entirely generated by ChatGPT 6 Astra.
MO then asked the AI to provide a full formalization of the proof in Lean~\cite{RP2Lean2026}.
During the process, the bound was improved from the original bound of $\exp[-c_\beta\log r\log\log r]$
to the present $\exp[-c_\beta(\log r)^2]$.
The Lean formalization was then checked with ChatGPT 6 Astra in a separate account,
and also with Claude Opus 5.5.
Furthermore, it went through an automatic verification and registered at Palomar registry~\cite{RP2-Palomar-20261023}.
The compilation, dependency and kernel-replay
checks are recorded separately in Sec.~\ref{sec:formalization}.

The human author (MO) wrote the title, the abstract, and
Sections~\ref{sec:introduction}, \ref{sec:method}, and \ref{sec:personal}.
The Lean code and proof exposition were generated by AI (ChatGPT 6 Astra).
The goal setting of the heuristic arguments in Section~\ref{sec:consequences}
was given by MO;
details were generated by ChatGPT 6 Astra, and further improved
by Claude Opus 5.5. 
MO substantially edited Section~\ref{sec:consequences} and
the historical overview in Sec.~\ref{sec:historyRP2},
and decided the overall structure of the paper.
The paper was then finalized by MO after the critical review by ChatGPT 6 Astra and Claude Opus 5.5.
As the sole human author, MO is responsible for the entire paper.

\section{Personal background and outlook on \textit{vibe-proving}}
\label{sec:personal}

As a physicist, I always long for mathematical proofs of physically interesting statements. While my own attempts are usually closer to hand-waving than real mathematical proofs, I appreciate and respect mathematical theorems. A mathematically rigorous proof is usually in a different league from physicists' arguments, and quite challenging even for many statements which are regarded as well established in (non-mathematical) physics communities. Representative examples include the existence and mass gap of Yang--Mills gauge theory (one of the Millennium Prize Problems~\cite{JaffeWitten2006}), and the absence of an infinite cluster at criticality, $\theta(p_c)=0$, for nearest-neighbor percolation on $\Z^3$
(also addressed in a claimed proof in the October 6 release~\cite{OpenAIPercolation2026}; see also references therein).
Among the October 6 releases, the claimed proof of the Haldane gap for even periodic spin-one chains,
together with a companion result on the topological index of boundary-selected infinite-volume states~\cite{OpenAIHaldane2026,OpenAIHaldaneBoundary2026}, would belong to this class.
As far as I can remember, no physicist has cast doubt about the non-vanishing excitation gap in the $S=1$ Heisenberg antiferromagnetic chain since 1990 or so,
despite the initial skepticism about Haldane's conjecture~\cite{Haldane1983}.
The Haldane gap was supported by theoretical arguments and even experimental evidence~\cite{Buyers1986}, but perhaps most convincingly by numerical simulations~\cite{WhiteHuse1993}.
Modern numerical algorithms such as Density-Matrix Renormalization Group and Quantum Monte Carlo simulation can handle rather large systems of several hundred spins with considerable accuracy, and we have very precise estimates of the gap in the $S=1$ Heisenberg antiferromagnetic chain.
However convincing these evidences are, they do not substitute a mathematically rigorous proof.
The proof of the gap remained difficult, but the OpenAI October 6 release claims such a proof.
It should be noted that, the OpenAI proof was without a Lean formalization in the release catalogue consulted on 12 October 2026,
although Tasaki~\cite{TasakiHaldane2026} has examined and provided expositions of the analytic part of the OpenAI proof.

``Practical'' physicists might question the relevance of such proofs of ``physically established statements''.
I am not sure if I can convince them of the importance of such proofs per se, but
developing mathematically rigorous frameworks will be useful, as sometimes they do provide definite answers on controversial issues.
One prominent example was the random-field Ising model. There were two quite different arguments, each of them appealing and convincing in its own way, on the random-field Ising model (RFIM). One predicted long-range order for weak disorder in three dimensions, while dimensional-reduction arguments suggested its absence~\cite{ImryMa1975,ParisiSourlas1979}.
After a long controversy, rigorous proofs established long-range order at weak disorder, first at zero temperature and then at sufficiently low positive temperatures~\cite{Imbrie1985,BricmontKupiainen1988}.
Even if such occurrences might be relatively rare, it would be important to develop mathematical frameworks for decisive applications
(besides their own aesthetic value).

Among many physics-related claims in the October 6 release, there is a statement on the exponential decay of correlation function
in the O($n$) model in 2 dimensions for $n \geq 3$, with a formalization in Lean~\cite{OpenAI2026}.
This result has perhaps attracted less attention than other claims so far.
Actually, the statement is what most physicists expected from the perturbative Renormalization-Group calculation
(which is often a textbook example of ``asymptotic freedom'') and might be classified as ``physically established statements''.
Nevertheless, it was not without controversy and there were arguments for massless phases
(with power-law correlation functions) by reputable scientists~\cite{Patrascioiu1995}.
The Lean-verified result settles the question of algebraic spin decay in the O($n$) model in 2 dimensions for $n \geq 3$ negatively.
See Appendix~\ref{app:historyO3} for more details.

Given this background, and also strong (commercial) AI models available to me, I tried \textit{vibe-proving} by myself. I just asked AI to prove a similar statement for a variant of the O($n)$ model, $\RP^2$ model in 2 dimensions.
It would be fair to say that the $\RP^2$ model has been more controversial than the O($n$) model
because of the presence of topological $\Z_2$ vortices, as discussed in earlier sections.
Without any scientific or mathematical insight supplied by me, AI did give a proof of a nontrivial bound on the correlation functions in the $\RP^2$ model. It is weaker than the exponential bound for the standard O($n$) model,
but it is strong enough to rule out  critical behavior with power-law correlations of the quadrupolar order parameter.
I believe that this is quite interesting also from a physics perspective. (It might be also interesting from a mathematical perspective, but I cannot judge how much the proof is different from the original proof on the O($n$) model in the October 6 release).
On the other hand, as a non-mathematician (and non-mathematical physicist), I am not qualified to verify the proof (at least in a short period of time). I do not think it appropriate to release unverified, possible ``AI slop'' to the general public. However, AI also managed to give a full formalization of its proof in Lean. Since it has also undergone several checks, such as the Lean checks described in Sec.~\ref{sec:formalization} and an independent check by another AI,
I suppose it is reliable, at least to the extent that warrants publication in some form.
In the old days, people tended to check outputs of electronic calculators or computers by manual calculation or with abacus,
but perhaps nobody does so any more.
These days, we also routinely rely on compilers and interpreters without manually rechecking every operation.
Only in rare cases of highly suspicious bugs identified by experts, the correctness of the compiler or interpreter is seriously examined.
Mathematical proofs might follow a similar path.

It should be, however, noted that, there is a potential pitfall in Lean formalization besides possible bugs:
possible discrepancies between the natural language statements and formalized version~\cite{Bastounis2026}.
In this paper, the correspondence between the stated theorem and the formalized version was
checked independently by both ChatGPT 6 Astra and Claude Opus 5.5.
Furthermore, the Lean formalization was automatically verified and registered at Palomar registry~\cite{RP2-Palomar-20261023}.
With these checks, we believe that the statements in this paper and the Lean formalization are consistent although
we do not claim that with absolute certainty. The proof exposition in the appendices of this paper, on the other hand, might not
be completely faithful to the Lean formalization.

Even if we accept that the formalization and thus the proof is correct, one might think that such a result which can be produced by AI
without key inputs from humans, is no longer worth publishing.
I could agree with that as a general statement. However, this is a problem which I cared about for a long time.
I was fascinated by the problem when I was a student (more than 30 years ago),
and I have occasionally thought about it since then.
In the end, I couldn't contribute much to the problem. Only recently,
I got lucky to have an excellent student who carried out
an impressive (to me) numerical study based on the tensor-network method~\cite{UedaOshikawa2022}.
While it was suggestive, it was not quite conclusive and
the problem remained as an important open problem for me.

So I decided to (ask AI to) write up the result. As I wrote above, I believe that the result is interesting from a physics perspective.
I would say, this is certainly more interesting than the average of my own papers published so far.
Also, this may be of interest as an attempt of \textit{vibe-proving} in October 2026.
Since the proof is entirely by AI,
I certainly do not claim any credit on the proof itself.
I just prompted AI with the questions (although, one might argue that that will be the important task for humans,
at least until AI starts asking better questions than we do).

It is also unclear if the present finding (mostly by AI) is appropriate for publication in the
format of a traditional scientific paper.
Perhaps in the (near) future, a new platform for scientific/mathematical
results will be developed, for more efficient and rapid dissemination of knowledge.
However, at this point the best channel is not quite clear.
As an experiment, I decided to ``write up'' the result with background and discussion
in a format following traditional scientific papers.
We will see how the scientific communications will evolve in the new era of AI-accelerated research.

As we can see from several examples discussed above and in this paper,
there is a huge ``gap'' between ``physically established statements'' (with varying degrees of certainty) and mathematical theorems.
Now we can expect to fill the ``gap'' at an unprecedented pace with the increasingly powerful AI;
occasionally we will be surprised by rigorous refutations of widely believed statements in physics or
new discoveries.
Hopefully it will bring new developments in mathematics, which will then bring unprecedented developments in physics and other sciences. 

While this is an exciting time, I also share deep concerns.
We are doing science (or mathematics) driven by curiosity and desire to find out truth.
In this respect, AI is certainly accelerating our progress in knowing truth.
At the same time, the process of discovery has been a profound joy and has had some aspects of art.
I actually find it exhilarating --- sometimes even addictive --- to elucidate a difficult problem with AI,
at a pace unimaginable without AI, at least for now as I can take the initiative of the process.
However, it is also undeniable that we are losing at least some part of the joy and art of science,
by replacing human efforts by AI. It can possibly lead to a loss of the community of
curious humans in the future.
At this point, we physicists can supply AI with questions or
conjectures based on our physical intuitions or experimental observations, to make substantial progress.
However, what will be the role of humans, if any, in the future when AI becomes much more powerful?

This is not just a problem for scientists or mathematicians; in increasingly many areas of human endeavor,
we are facing a similar problem. Throughout the history of human civilization, there have been
many cases of displacement of human labor by machines or technologies; often those were brutal
for people affected. It could be argued that when some jobs or roles are displaced,
the technology has created new jobs or roles for humans. However, it is not clear if that
will also be the case in the future with AI, and even if it is, the pace of change
might be too fast for humans to adapt.
It would be certainly great if AI will help us eliminate diseases and achieve longevity.
But what would be the point of longevity, \emph{if} the \emph{purpose of life} or \emph{joie de vivre}
is taken away from the majority of humans, not just from scientists?
I cannot exclude the possibility that ``modern Luddites'' are right and we should stop developing AI,
although I do not take that position for now.
In the case we agree to stop or abolish AI, how can we enforce that globally?
Where would be the boundary between ``AI'' (which should be stopped) and
non-AI technologies (which are to be developed further)?
In any case, these are very important issues which should be discussed in the society at large,
not just among scientists or mathematicians.

\section*{Code availability}
The complete Lean 4 source, the standalone files whose SHA-256 hashes are
listed in Sec.~\ref{sec:formalization}, the verification scripts and the
audit records are available at Zenodo~\cite{RP2Lean2026}.

\section*{Acknowledgments}

MO thanks Akihiro Tanaka for earlier discussions on the $\RP^2$ model,
and Atsushi Ueda for the related collaboration in Ref.~\cite{UedaOshikawa2022}.
This work is supported in part by
JSPS KAKENHI Grant No. JP23K25791.
Parts of this work were conducted using
ChatGPT for Academic Researchers Program provided by OpenAI
and Claude Max provided by Anthropic, as well as
a paid ChatGPT Pro account.

\appendix

\section{Controversies on the two-dimensional classical Heisenberg (O(3)) model}
\label{app:historyO3}

As discussed in the main text, the standard picture of
the two-dimensional classical Heisenberg model~\eqref{eq:model_O3}
was based on the purported monotonic RG flow of the coupling
(the effective temperature) to infinity~\cite{Polyakov1975,Brezin1976}.
However, the perturbative RG analysis alone does not
establish such an RG flow.

Patrascioiu and Seiler challenged the standard picture. Their
``super-instanton'' argument questioned the uniformity of perturbation
theory when the low-temperature and infinite-volume limits are
interchanged \cite{Patrascioiu1995}. Niedermayer, Niedermaier and Weisz
responded by analyzing the special boundary conditions involved, followed
by a rejoinder from Patrascioiu and Seiler
\cite{Niedermayer1997,Patrascioiu1998Arguments}. A second line of argument
related a possible massless phase to percolation properties of spin
projections, with constrained models and universality assumptions playing
an essential role \cite{Patrascioiu2002Percolation}.
While these challenges were substantive, they did not provide
unconditional proofs of a massless phase in the model.

The controversy continued after the 1999 open-problem statement~\cite{Princeton1999}.
All\`{e}s, Alonso, Criado and Pepe numerically investigated the equatorial
clusters used in the percolation argument \cite{Alles1999Percolation}.
Patrascioiu and Seiler objected that percolation at $\beta=J/T=2$, where
both sides accepted a finite correlation length, did not settle the
behavior at lower temperatures \cite{Patrascioiu2000Comment}. The reply
emphasized the restricted temperature range of the numerical conclusion
and questioned whether percolation of the relevant cluster implied a
massless induced rotator model \cite{Alles2000Reply}. A related exchange
concerned the apparent agreement between discrete icosahedral spin models
and $O(3)$ \cite{Patrascioiu2001Discrete,Hasenfratz2001}:
Caracciolo, Montanari and Pelissetto argued that a relevant discrete
anisotropy could remain inconspicuous until very large scales
\cite{Caracciolo2001}. Agreement over accessible distances therefore did
not settle equality of continuum limits. Seiler's review gives an account
of the dissenting views \cite{Seiler2003}.

The numerical difficulty persisted. Finite-size extrapolation had already
reached inferred correlation lengths of order $10^5$ in 1995
\cite{Caracciolo1995}. Much later, the tensor-network study of Schmoll
et al. found data compatible with the conventional picture while leaving
a BKT-like alternative open \cite{Schmoll2021}.
More recent tensor-network calculations found evidence for crossover toward the
zero-temperature limit \cite{UedaOshikawa2022,Burgelman2023}. The large
Monte Carlo study of Yao et al. subsequently tested and disfavored a
specific finite-temperature BKT fit \cite{Yao2025}. Such comparisons can
rule out particular scenarios, while they still leave the possibility
of massless phase under a different picture open.
On the rigorous side, Aru, Garban and Sep\'ulveda
developed the geometric framework and exhibited counterexamples to a
general inference from dominant percolating strong rotator couplings to
masslessness \cite{Aru2025}.

The recent OpenAI preprint proves exponential spin decay for the
nearest-neighbor ferromagnetic $O(n)$ model, $n\ge3$, uniformly over finite
square-lattice subgraphs with bounded nonnegative couplings and free
boundary, and in their free-boundary limits \cite{OpenAI2026}.
For $n=3$, this excludes the proposed algebraic spin phase within that
class, settling an important part of the long-standing controversy.

On the other hand,
it does not by itself establish the precise low-temperature growth
law of the correlation length or every claim about the continuum limit.
Nor should spin decay alone be conflated with a statement about all
topological observables: recent proposals based on topological-charge
fluctuations ask a distinct question \cite{TangShi2025,Tang2026}.

On the technical side,
the fourth-order rotational argument of the OpenAI proof is the starting
point of the present work.

\section{Rotation calculus and the fourth-order family inequality}
\label{app:input}
\label{app:family}

We first establish the finite-graph input from
Ref.~\cite[Section 2 and Proposition 2.2]{OpenAI2026} for the interaction
in \eqref{eq:model}. Temporarily allow any finite graph, orient each edge
once, and pin an arbitrary set $P\subset V$ to arbitrary spins in $S^2$.
Expectations below are in this normalized pinned measure. Let
$a,b,c\in\so$ be cross product with an oriented orthonormal basis; thus
$[a,b]=c$ and all three generators have operator norm one. For a real
skew-matrix edge field $A=(A_e)$, put
\begin{equation}\label{eq:twisted}
 H(A)=\sum_{e=(x,y)}b_e
 \bigl[(\boldsymbol n_x^Te^{A_e}\boldsymbol n_y)^2
       -(\boldsymbol n_x^T\boldsymbol n_y)^2\bigr],
 \qquad W(A)=\av{e^{H(A)}}.
\end{equation}
Write $X=DH(0)$ and $H_j=D^jH(0)$. Compactness justifies differentiation
under the integral. Derivatives are extended complex multilinearly;
changes of variables use real rotations only.

Each mixed derivative of $\boldsymbol n_x^Te^A\boldsymbol n_y$ in
norm-one generators has absolute value at most one. The product rule for
its square therefore gives, for $1\le j\le4$,
\begin{equation}\label{eq:local}
 H_j(v_1g_1,\ldots,v_jg_j)
 =\sum_eJ_e(g_1,\ldots,g_j)\prod_{s=1}^jv_s(e),
 \qquad |J_e(g_1,\ldots,g_j)|\le2^j\beta\le M:=16\beta.
\end{equation}
The coefficients are real. No sign condition on $b_e$ is used.
For $df(e)=f_y-f_x$ and scalar edge fields, define
\begin{equation}\label{eq:R}
 \mathcal R(h,k)=-D^2W(0)[hc,kc]
 =-\av{H_2(hc,kc)+X(hc)X(kc)}.
\end{equation}
This is a symmetric complex-bilinear form with real coefficients;
$\mathcal R(h,\overline h)$ is real, but positivity is not assumed.
The response differentiates $W$, not $\log W$. All $\ell^p$ norms use
counting measure. For an edge-indexed matrix,
$\norm T_{\HS}^2=\sum_{e,\ell}|T_{e\ell}|^2$.

\begin{lemma}[Fourth-order family inequality]\label{lem:family}
For a finite family of complex vertex functions $(f_i)$ vanishing on $P$,
set
\begin{equation}\label{eq:familydefs}
 u_i=df_i,\quad T_{e\ell}=\sum_i u_i(e)\overline{u_i(\ell)},\quad
 S(f,g)_e=\tfrac12(f_xg_y-g_xf_y),\quad
 q=\sum_i S(f_i,\overline f_i).
\end{equation}
Then, with $K(M)=6M^2+4M^{3/2}+M$,
\begin{equation}\label{eq:family}
 \mathcal R(q,\overline q)\le K(M)\norm T_{\HS}^2
 +4\beta\sum_{i,j}
 \min\{\norm{f_i}_\infty^2\norm{df_j}_2^2,
        \norm{f_j}_\infty^2\norm{df_i}_2^2\}.
\end{equation}
All constants are uniform in the graph, family, pinned set, and pinned values.
\end{lemma}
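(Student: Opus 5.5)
The plan is to derive \eqref{eq:family} from exact gauge invariance of $W$ along vertex rotations, and to use the noncommutativity $[a,b]=c$ to move a second-order $c$-response onto fourth-order quantities built from the gradients $u_i=df_i$. For a vertex field $\varphi=(\varphi_x)$ of skew matrices vanishing on $P$, the substitution $\boldsymbol n_x\mapsto e^{\varphi_x}\boldsymbol n_x$ preserves the pinned product measure. Under it, $\boldsymbol n_x^T\boldsymbol n_y$ becomes $\boldsymbol n_x^Te^{A_e(\varphi)}\boldsymbol n_y$ with $e^{A_e(\varphi)}=e^{-\varphi_x}e^{\varphi_y}$. Hence $W(A(\varphi))=1$ identically for small real $\varphi$, and by analyticity also for the complex multilinear extension.

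First I would take $\varphi=t\sum_i(f_i a+\overline f_i b)$, or more symmetrically a two-parameter family $s f_i a+t\overline f_i b$ summed over $i$, so that conjugate pairs produce $q$. By Baker--Campbell--Hausdorff,
\[
A_e=d\varphi(e)-\tfrac12[\varphi_x,\varphi_y]+O(\varphi^3),
\]
and the $a$--$b$ cross term is exactly $S(f_i,\overline f_i)_e\,c$ up to sign. The third-order BCH terms have the form $\varphi\,\varphi\,d\varphi$, i.e. $f\,f\,df$. Expanding $W(A(\varphi))=1$ in Taylor series to fourth order in $t$ and extracting the coefficient gives
\[
D^2W[qc,qc]+3D^3W[du,du,qc]+D^4W[du,du,du,du]+(\text{BCH cubic})\cdot DW+(\text{BCH cubic})\cdot D^2W[du,\cdot]=0,
\]
schematically. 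Here $du$ means the $a$/$b$ gradient fields. The lower-order identities (order one and two in $t$) are Ward identities: $\langle X(u a)\rangle=0$, and $\langle H_2(u,v)+X(u)X(v)\rangle=$ (a BCH correction). I will use these to simplify. The identity then expresses $\mathcal R(q,\overline q)=-D^2W[qc,\overline q c]$ as a sum of expectations of products of $H_j$ and $X$ evaluated on the $u_i$, plus terms containing the cubic BCH pieces.

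Second, I would bound each term using the locality \eqref{eq:local}: every $H_j(v_1g_1,\dots)$ is an edge sum $\sum_eJ_e\prod v_s(e)$ with $|J_e|\le M$. Summing over the family index, the purely local pieces $\sum_e J_e\, u_i(e)\overline{u_i(e)}u_j(e)\overline{u_j(e)}$ are bounded by $M\sum_e T_{ee}^2\le M\norm T_{\HS}^2$. The genuinely nonlocal products, such as $\langle X(u_i)X(\overline u_i)X(u_j)X(\overline u_j)\rangle$ and $\langle H_2 X X\rangle$, I would handle by Cauchy--Schwarz in the Gibbs measure. This requires reducing the second moments of $\sum_i X(u_i)X(\overline u_i)$ to $\norm T_{\HS}^2$. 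The same Ward identity, applied now to the rotated observable, should give $\langle|\sum_i X(u_i)\overline{X(u_i)}|^2\rangle\lesssim M^2\norm T_{\HS}^2+M\norm T_{\HS}^2$. Together with the mixed term this yields the constant $K(M)=6M^2+4M^{3/2}+M$, where the $M^{3/2}$ comes from an AM--GM step between the $M^2$ and $M$ contributions. The BCH cubic terms $f_if_j\,df_j$, paired against one factor $J_e\le 2\beta$ from $X$ or $H_1$, are estimated directly by $\norm{f_i}_\infty^2\norm{df_j}_2^2$. Since the roles of $i$ and $j$ can be swapped when choosing which function carries the sup norm, this gives the $4\beta\min\{\cdot,\cdot\}$ term.

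The main obstacle is the fourth-moment step: showing that the mixed moments of $X$ along the family are controlled by $\norm T_{\HS}^2$ rather than by the trivially larger $(\sum_i\norm{u_i}_2^2)^2$. This has to hold with no positivity of $\mathcal R$ and no sign condition on $b_e$. I would first try to write the fourth-order identity with the family sum inside, as a single quadratic form in the edge-pair matrix $T$. Every nonlocal term would then appear as $\sum_{e,\ell}T_{e\ell}\,\overline{T_{e\ell}}$ times a bounded coefficient. If exact absorption fails, I would fall back on polarization in the two parameters $(s,t)$ and mixed derivatives, as the Lean development reportedly does. All constants depend only on $\beta$ through $M=16\beta$, which gives the claimed uniformity in the graph, the family, and the pinned data.
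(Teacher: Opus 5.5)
There is a genuine gap at the step you flag as the main obstacle, and a second one in the bookkeeping of the gauge identity.

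\textbf{The fourth-order current term.} You intend to bound the four-current term in absolute value by Cauchy--Schwarz, which needs $\E|\sum_i X(u_ia)X(\overline u_ib)|^2\lesssim\norm{T}_{\HS}^2$. Nothing in your toolkit gives this. The Ward identities control only second moments of single gradient currents, $\av{|X((df)g)|^2}\le M\norm{df}_2^2$. The mixed $(a,b,a,b)$ fourth moment, by contrast, is exactly the noncommutative quantity that the identity trades against the indefinite form $\mathcal R(q,\overline q)$.

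The missing idea is a sign, not a sharper moment bound. Only an upper bound on $\mathcal R(q,\overline q)$ is required, and $\sum_{i,j}D^4W(0)[u_ia,\overline u_ib,\overline u_ja,u_jb]$ enters it with a minus sign. In the fifteen-term expansion of $D^4e^H$, the four terms $(X_1X_2+H_{12})(X_3X_4+H_{34})$ sum over $i,j$ to $\E|\sum_i\{X(u_ia)X(\overline u_ib)+H_2(u_ia,\overline u_ib)\}|^2\ge0$, so they can simply be dropped.

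The remaining eleven terms are bounded after contracting the family index first. One has $\sum_i u_i(e)X(\overline u_i b)=X(t_eb)$ with $t_e=T_{e\cdot}=d(\sum_iu_i(e)\overline f_i)$. This is a gradient of a function vanishing on $P$, so $\E|X(t_eb)|^2\le Mr_e^2$ with $r_e^2=\sum_\ell|T_{e\ell}|^2$. The resulting bounds are:
\begin{itemize}
\item $M^2\norm{T}_{\HS}^2$ for each of the four $H_2XX$ terms and each of the two $H_2H_2$ terms;
\item $M\cdot M^{1/2}\sum_eT_{ee}r_e\le M^{3/2}\norm{T}_{\HS}^2$ for each of the four $H_3X$ terms, so $4M^{3/2}$ does not come from an AM--GM step;
\item $M\norm{T}_{\HS}^2$ for the $H_4$ term.
\end{itemize}

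\textbf{The gauge identity.} Expanding $W(A(\varphi))=1$ about the identity produces terms $D^3W(0)[du,du,Sc]$ with a non-gradient $c$-insertion, and none of your listed identities controls them. The paper instead uses covariance at the nontrivial twist $dF$: $W(dF)=W(B(F))$ with $B_e(F)=\log(e^{F_x}e^{F_y-F_x}e^{-F_y})$. Since $DB(0)=0$, and the $(a,b,a,b)$ fourth derivative of $B$ vanishes by the two half-turn equivariances, one gets exactly
$\mathcal R(q,\overline q)=\sum_{i,j}\mathcal R(S(f_i,f_j),\overline{S(f_i,f_j)})-\sum_{i,j}D^4W(0)[u_ia,\overline u_ib,\overline u_ja,u_jb]$.

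The $4\beta\min\{\cdots\}$ term comes from this cross pairing $S(f_i,f_j)$, which your schematic identity omits. It does not come from cubic BCH terms against a single $J_e$. In your expansion those terms pair with $D^2W(0)[du,\cdot]$, and they in fact vanish by half-turn equivariance and the commuting gauge identity.

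Nor can the cross terms be bounded directly. Using $\mathcal R(v,\overline v)\le4\beta\norm{v}_2^2$ together with $2S(f,g)_e=f_x\,dg(e)-g_x\,df(e)$ yields the sum of the two products, not the minimum asserted in \eqref{eq:family}. The minimum requires $W((h+df)c)=W(hc)$, hence $\mathcal R(h,df)=0$. This is combined with $S(f,g)+\tfrac12d(fg)=\tfrac12(f_x+f_y)\,dg$ and $S(f,g)-\tfrac12d(fg)=-\tfrac12(g_x+g_y)\,df$.
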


\begin{proof}
We give the three ingredients of the fourth-order calculation. First,
for real $f$ vanishing on $P$, a sitewise rotation gives
$W(t(df)g)=1$, for $g\in\{a,b,c\}$. Its second derivative, followed by
complexification, gives
\begin{equation}\label{eq:gradient}
 \av{|X((df)g)|^2}\le M\norm{df}_2^2.
\end{equation}
Similarly, the commuting gauge identity
$W((h+df)c)=W(hc)$ implies
$\mathcal R(h,df)=\mathcal R(df,h)=0$.
The sharper second-order coefficient bound in \eqref{eq:local} gives
$\mathcal R(v,\overline v)\le4\beta\norm v_2^2$.
Applying these facts to
\[
 S(f,g)+\tfrac12d(fg)=\tfrac12(f_x+f_y)\,dg,\qquad
 S(f,g)-\tfrac12d(fg)=-\tfrac12(g_x+g_y)\,df
\]
yields
\begin{equation}\label{eq:Scross}
 \mathcal R(S(f,g),\overline{S(f,g)})
 \le4\beta\min\{\norm f_\infty^2\norm{dg}_2^2,
                  \norm g_\infty^2\norm{df}_2^2\}.
\end{equation}
These are local Ward identities; see
Refs.~\cite{Driessler1979,AizenmanSimon1980}.

Second, let $F_v\in\operatorname{span}_{\R}\{a,b\}$ be small and zero
on $P$. Changing variables by $e^{-F_v}$ gives
\begin{equation}\label{eq:BCH}
 W(dF)=W(B(F)),\qquad
 B_e(F)=\log\!\left(e^{F_x}e^{F_y-F_x}e^{-F_y}\right).
\end{equation}
The local matrix logarithm satisfies
$DB(0)=0$ and $B_e(F)=\tfrac12[F_x,F_y]+O(F^3)$.
Its mixed second derivative in directions $fa,gb$ is $S(f,g)c$;
two equal generators give zero. Its mixed fourth derivative with
generator pattern $(a,b,a,b)$ also vanishes. Indeed, equivariance under
the half-turn about the first axis forces this derivative to lie in
$\R a$, and the half-turn about the second forces it to lie in $\R b$.
Their intersection is zero. Differentiating \eqref{eq:BCH} four times,
extending complex multilinearly, and summing the directions
$(u_i a,\overline u_i b,\overline u_j a,u_j b)$ therefore gives
\begin{equation}\label{eq:sumidentity}
 \mathcal R(q,\overline q)
 =\sum_{i,j}\mathcal R(S(f_i,f_j),\overline{S(f_i,f_j)})
 -\sum_{i,j}D^4W(0)[u_i a,\overline u_i b,\overline u_j a,u_j b].
\end{equation}
Only the two pairings with unlike generators survive the chain rule.

Third, bound the last sum from below. Write $X_r=X(A_r)$ and
$H_{r_1\cdots r_s}=H_s(A_{r_1},\ldots,A_{r_s})$ for the four indicated
directions. In the fifteen-term expansion of $D^4e^H$, the four terms
$(X_1X_2+H_{12})(X_3X_4+H_{34})$, after summing over $i,j$, equal
\[
 \E\left|\sum_i\{X(u_i a)X(\overline u_i b)
                  +H_2(u_i a,\overline u_i b)\}\right|^2\ge0.
\]
To control the other eleven terms, define
$t_e(\ell)=T_{e\ell}$ and $r_e^2=\sum_\ell|T_{e\ell}|^2$.
The identity
\[
 t_e=d\!\left(\sum_i u_i(e)\overline f_i\right)
\]
and \eqref{eq:gradient} imply
$\E|X(t_e g)|^2,\E|X(\overline t_e g)|^2\le Mr_e^2$.
The four remaining $H_2XX$ terms are each bounded in absolute
expectation by $M^2\sum_e r_e^2$ after the family indices are contracted.
For example, $H_{13}X_2X_4$ contracts to
$\sum_eJ_e(a,a)X(t_e b)X(\overline t_e b)$.
The two remaining $H_2H_2$ terms contract to coefficients times
$|T_{e\ell}|^2$ or $T_{e\ell}^2$, and have the same bound.
Each of the four $H_3X$ terms is bounded by
\[
 M^{3/2}\sum_e T_{ee}r_e\le M^{3/2}\norm T_{\HS}^2,
\]
because $T_{ee}\ge0$ and $\sum_eT_{ee}^2\le\norm T_{\HS}^2$.
The $H_4$ term is bounded by $M\sum_eT_{ee}^2$.
The summed fourth derivative is real, since complex conjugation
interchanges $i,j$ and permutes the four symmetric arguments. Thus
\begin{equation}\label{eq:lower}
 \sum_{i,j}D^4W(0)[u_i a,\overline u_i b,\overline u_j a,u_j b]
 \ge-K(M)\norm T_{\HS}^2.
\end{equation}
Equations~\eqref{eq:Scross}, \eqref{eq:sumidentity}, and \eqref{eq:lower}
prove the lemma.
\end{proof}

The corresponding formal statement is \texttt{RP2.Paper.family\_bound}.
The gauge logarithm above is an exposition of the fourth-order identity;
the Lean development establishes the needed identities by explicit
mixed derivatives and polarization.

\section{The Fourier family on a thin annulus}
\label{app:fourier}
\label{app:thin}

We construct the test functions for Lemma~\ref{lem:family} directly on
an annulus of variable width. The frequency blocks come from
Ref.~\cite[Section 3]{OpenAI2026}; keeping track of the spatial support
is the additional estimate needed for the log-square rate. Constants
$C$ in this appendix are universal and may increase from line to line.

\subsection{Profiles and their discrete approximation}

Choose a smooth even cutoff $\theta$, equal to one on $[-4/3,4/3]$
and zero outside $(-5/3,5/3)$, and put
$\Theta(z)=\theta(z_1)\theta(z_2)$. For $g_m=\partial_m\Theta$,
choose a smooth real cutoff $\chi_m$ supported in
$1<\norm z_\infty<2$ and equal to one on the $1/16$ neighborhood of
$\supp g_m$. Set $\phi_{m,\pm}=(\chi_m\pm g_m)/2$.
Then $\phi_{m,+}^2-\phi_{m,-}^2=g_m$, and whenever
$\norm{z-w}_\infty\le1/16$,
\begin{equation}\label{eq:profileproduct}
 \phi_{m,+}(z)\phi_{m,+}(w)-\phi_{m,-}(z)\phi_{m,-}(w)
 =\tfrac12\{g_m(z)+g_m(w)\}.
\end{equation}
Indeed, expanding the products leaves
$(\chi_m(z)g_m(w)+g_m(z)\chi_m(w))/2$; a nonzero derivative makes
the cutoff at the other point equal to one.

Let $N=2^n$, $\delta=2^{-h}$, with $1\le h$, $2h\le n$, and
$n-h\ge4$. Choose integer radii $N\le R<S\le2N$ with
$S-R=\delta N$, and put $\alpha=R/N$. Define the plateau map
\[
 P_c(t)=\begin{cases}t+c&t<-c,\\0&|t|\le c,\\t-c&t>c,\end{cases}
 \quad c=\alpha-\delta,\qquad
 T_{\alpha,\delta}(z)=\delta^{-1}(P_c(z_1),P_c(z_2)),
\]
and the rescaled profiles
\begin{equation}\label{eq:thinprofiles}
 \Theta_{\alpha,\delta}=\Theta\circ T_{\alpha,\delta},\qquad
 \psi_{m,\pm}=\delta^{-1/2}\phi_{m,\pm}\circ T_{\alpha,\delta}.
\end{equation}
The $\psi$ profiles vanish for $\norm z_\infty\le\alpha$ and for
$\norm z_\infty\ge\alpha+\delta$; $\Theta_{\alpha,\delta}$ equals
one and zero, respectively, on these regions. Since $P_c$ is
$1$-Lipschitz,
\begin{equation}\label{eq:profilebounds}
 \norm\psi_\infty\le C\delta^{-1/2},\qquad
 \operatorname{Lip}(\psi)\le C\delta^{-3/2},\qquad
 \operatorname{Lip}(\Theta_{\alpha,\delta})\le C\delta^{-1}.
\end{equation}
For a coordinate step $te_m$, $|t|\le\delta/16$, either the transformed
coordinate increment is $t/\delta$, or both transformed coordinates are
within $1/16$ of zero and the relevant derivatives vanish. Applying
\eqref{eq:profileproduct} and the one-dimensional trapezoid error bound
in the first case, and constancy in the second, gives
\begin{align}\label{eq:thinsteperror}
 \big|\Theta_{\alpha,\delta}(z+te_m)-\Theta_{\alpha,\delta}(z)
 -t\{&\psi_{m,+}(z)\psi_{m,+}(z+te_m)\notag\\[-.4ex]
     &-\psi_{m,-}(z)\psi_{m,-}(z+te_m)\}\big|
 \le C\delta^{-3}|t|^3.
\end{align}
This argument does not require the plateau map to be globally smooth.

Work on any retained nearest-neighbor graph in the closed annulus
$R\le\norm x_\infty\le S$, pinning its two boundary layers.
Orient edges in the positive coordinate directions. Define real edge fields
\begin{align}\label{eq:thinfields}
 u_e&=\Theta_{\alpha,\delta}(y/N)-\Theta_{\alpha,\delta}(x/N),
       &&e=(x,y),\notag\\
 v_e&=\frac1N\{\psi_{m,+}(x/N)\psi_{m,+}(y/N)
             -\psi_{m,-}(x/N)\psi_{m,-}(y/N)\},
       &&y=x+e_m.
\end{align}
Since $\delta N\ge16$, the preceding estimates imply
\begin{equation}\label{eq:thinfieldbounds}
 |u_e|\le C\delta^{-1}N^{-1},\qquad
 |u_e-v_e|\le C\delta^{-3}N^{-3}.
\end{equation}

\subsection{Truncated frequency blocks and their cost}

For $h\le j<n$, write $\rho_j=2^j$ and use the block
\[
 \mathcal P_j=\{p\in\Z^2:\rho_j\le p_1<2\rho_j,\ |p_2|\le\rho_j\}.
\]
Set $p^{1,+}=p$, $p^{2,+}=(p_2,p_1)$, and $p^{m,-}=-p^{m,+}$.
The four groups of modes are
\begin{equation}\label{eq:modes}
 f_{m,\pm,j,p}(x)=\rho_j^{-3/2}\psi_{m,\pm}(x/N)
       \exp\!\left(\frac{2\pi i\,p^{m,\pm}\cdot x}{16N}\right).
\end{equation}
Restrict them to the retained vertices. They vanish on both pinned layers,
regardless of the pinned spin values. Let $T,q$ be as in
\eqref{eq:familydefs}.

\begin{lemma}[Thin-annulus family estimates]\label{lem:modes}
For this family,
\begin{equation}\label{eq:qidentity}
 q=-i\lambda_{n,h}v,\qquad
 \lambda_{n,h}=
 N\sum_{j=h}^{n-1}\sum_{p\in\mathcal P_j}\rho_j^{-3}
       \sin\!\left(\frac{2\pi p_1}{16N}\right)
 \ge\frac{n-h}{2}\ge\frac n4,
\end{equation}
and
\begin{equation}\label{eq:cost}
 \norm T_{\HS}^2+
 \sum_{i,\ell}\min\{\norm{f_i}_\infty^2\norm{df_\ell}_2^2,
                     \norm{f_\ell}_\infty^2\norm{df_i}_2^2\}
 \le C\frac n\delta.
\end{equation}
The estimates survive arbitrary deletion of vertices and edges.
\end{lemma}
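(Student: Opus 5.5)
We prove the identity for $q$ first and then the cost bound.

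\textbf{The identity $q=-i\lambda_{n,h}v$.} Fix an edge $e=(x,x+e_m)$ and expand
\[
q_e=\sum_i S(f_i,\overline f_i)_e=\tfrac12\sum_i\bigl(f_i(x)\overline{f_i(y)}-\overline{f_i(x)}f_i(y)\bigr)=-i\sum_i\operatorname{Im}\bigl(\overline{f_i(x)}f_i(y)\bigr).
\]
For a mode $f_{m',\sigma,j,p}$ the product $\overline{f(x)}f(y)$ equals $\rho_j^{-3}\psi_{m',\sigma}(x/N)\psi_{m',\sigma}(y/N)\exp\bigl(2\pi i\,p^{m',\sigma}\cdot e_m/(16N)\bigr)$.

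\emph{Which modes contribute.} The imaginary part of this product is odd in the sign $\sigma$, because $p^{m',-}=-p^{m',+}$. Summing over $\sigma$ therefore gives
\[
\rho_j^{-3}\sin\!\bigl(2\pi p^{m'}_{m}/16N\bigr)\{\psi_{m',+}\psi_{m',+}-\psi_{m',-}\psi_{m',-}\}.
\]
Here $p^{m'}_m$ is the $m$-th component of $p^{m',+}$.

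\emph{Killing the cross terms $m'\ne m$.} When $m'\ne m$, the component $p^{m'}_m$ is the second coordinate $p_2$ of $p$. The block $\mathcal P_j$ is symmetric under $p_2\to-p_2$, so the sine sum vanishes. When $m'=m$, the component is $p_1$, which gives $N\lambda_{n,h}\cdot v_e/N$. Hence $q=-i\lambda v$.

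\emph{The lower bound on $\lambda_{n,h}$.} Use $\sin t\ge 2t/\pi$ on $[0,\pi/2]$; this applies since $p_1<2\rho_j\le N$. Then
\[
N\rho_j^{-3}\sum_{p\in\mathcal P_j}\sin(\cdot)\ \ge\ \rho_j^{-3}\,|\mathcal P_j|\,\rho_j/4\ \gtrsim\ 1/2\quad\text{per scale }j.
\]
Checking that the constant reaches $1/2$ is a routine count. Summing over $n-h$ scales and using $h\le n/2$ gives $\lambda_{n,h}\ge n/4$.

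\textbf{The cost bound \eqref{eq:cost}.} Two pointwise bounds do all the work. First, $\|f_i\|_\infty\le C\rho_j^{-3/2}\delta^{-1/2}$. Second, the support of $f_i$ in lattice points lies in the annulus, which has about $N\cdot\delta N$ sites. Differences obey
\[
|df_i(e)|\le C\rho_j^{-3/2}\bigl(\delta^{-3/2}N^{-1}+\delta^{-1/2}\rho_j/N\bigr).
\]
The first term comes from the profile Lipschitz bound and the second from the phase. Since $\rho_j\ge 2^h=1/\delta$, the phase term dominates. So $|df_i|\le C\rho_j^{-1/2}\delta^{-1/2}N^{-1}$, and
\[
\|df_i\|_2^2\le C\rho_j^{-1}\delta^{-1}N^{-2}\cdot\delta N^2=C\rho_j^{-1}.
\]

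\emph{The min term.} Bound the minimum by the geometric mean, or better by choosing the smaller index to carry the sup-norm. Put the sup-norm on the mode with larger $\rho$; the pair $(j,k)$ then contributes about
\[
|\mathcal P_j||\mathcal P_k|\,\rho_{\max}^{-3}\delta^{-1}\rho_{\min}^{-1}\approx \rho_{\min}\rho_{\max}^{-1}\delta^{-1}.
\]
Summing this over $j,k$ gives $O(n/\delta)$.

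\emph{The Hilbert--Schmidt term.} This is the main obstacle, and we expect to need orthogonality rather than the triangle inequality. Write $\|T\|_{\HS}^2=\sum_{i,\ell}|\langle du_i,du_\ell\rangle|^2$ in edge space. Then:
\begin{itemize}
\item the diagonal-in-scale terms cost $\sum_i\|du_i\|^4\cdot$(number of near-orthogonal partners);
\item modes at different frequencies $p\ne p'$ are almost orthogonal, because the phases differ over the window of length $\delta N$.
\end{itemize}
We will use summation by parts in the $m$ direction, through the Lipschitz bound on $\psi$. This gives $|\langle du_i,du_\ell\rangle|\lesssim\rho^{-3}\cdot\delta^{-1}$-scaled decay in $|p-p'|\delta N/N$. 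Equivalently, apply Plancherel to the smooth window $\psi^2$ of width $\delta N$ in the radial variable. For fixed $i$ the effective number of partners is then about $\rho_j/\delta^{-1}\cdot\ldots$, and the total is $\sum_j C/\delta=Cn/\delta$. The same Plancherel computation is the one in Ref.~\cite[Section 3]{OpenAI2026} with full width. Here we keep the factor $\delta$ coming from the support area, which is the additional bookkeeping.

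\textbf{Deletions.} Deleting vertices and edges only removes nonnegative summands in the norms, and $q$ is computed edgewise. One caveat: the Gram entries in $\|T\|_{\HS}$ are not monotone under deletion. There we instead bound $\|T\|_{\HS}^2\le\sum_{e,\ell}(\sum_i|u_i(e)||u_i(\ell)|)^2$ restricted to the retained edges, and we rely on the orthogonality estimate being proven in a form that is stable under restriction, namely via pointwise bounds on $T_{e\ell}$ in terms of $|e-\ell|$.
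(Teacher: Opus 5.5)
Your derivation of $q=-i\lambda_{n,h}v$ and of $\lambda_{n,h}\ge(n-h)/2\ge n/4$ agrees with the paper. So do your bounds $\norm{f_i}_\infty^2\le C\delta^{-1}\rho_j^{-3}$ and $\norm{df_i}_2^2\le C\rho_j^{-1}$ (the latter using the support area $\delta N^2$), and your count $\sigma/(\delta\rho)$ for the minimum term. The gap is the Hilbert--Schmidt term. It is the real content of the lemma, and you leave it as a sketch ending in an ellipsis.

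The mechanism you describe also loses the decisive power of $\delta$. Suppose the Gram entries $G_{pp'}=\langle df_{j,p},df_{j,p'}\rangle$, of size up to $\rho_j^{-1}$, decay only in $\delta\abs{p-p'}$. Then each frequency in the two-dimensional block $\mathcal P_j$ has about $\delta^{-2}$ partners, and the count is $\sum_j\rho_j^2\cdot\delta^{-2}\cdot\rho_j^{-2}=n\delta^{-2}$. In Proposition~\ref{prop:thin} this gives a comparison error of order $1/(\delta^2 n)\asymp 2^{2h}/n$, which is not small when $\delta\asymp 1/n$, so the log-square rate is lost. The bound $n/\delta$ holds because the annulus is thin only radially.

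Your passing remark about Plancherel for the window could be turned into a proof that captures this automatically. For fixed $p$ at scale $j$, Plancherel over the frequencies $p'$ of one group at scale $k$ gives $\sum_{p'}\abs{G_{pp'}}^2\le C(\rho_j\rho_k)^{-1}\delta^{-1}$, since $\psi^2\le C\delta^{-1}$ on an area of order $\delta N^2$. Taking the fixed mode from the smaller of the two blocks and counting its $O(\min\{\rho_j,\rho_k\}^2)$ choices gives $C\delta^{-1}\min\{\rho_j/\rho_k,\rho_k/\rho_j\}$, which sums over $(j,k)$ to $Cn/\delta$. None of this, including the cross-scale terms, is in your proposal.

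The paper avoids Gram-matrix decay estimates altogether.
\begin{itemize}
\item For one profile group, it places the columns $\rho_j^{1/2}df_{j,p}$ in a matrix $U$, so the group contributes $UDU^*$ to $T$ with $D=\diag(\rho_j^{-1})$.
\item It embeds the annulus in the discrete torus of side $16N$, where the selected frequencies are distinct modulo $16N$, so the Fourier matrix satisfies $F^*F=(16N)^2I$.
\item The discrete product rule writes $U$ through multiplication by $\Delta_\ell\psi$ or by $\psi$, composed with $F$. Hence $\norm{U}_{\op}\le C\delta^{-1/2}$ follows from $\norm{\psi}_\infty\le C\delta^{-1/2}$, $\operatorname{Lip}(\psi)\le C\delta^{-3/2}$ and $\rho_j^{-1}\le\delta$.
\item The ideal inequality $\norm{UDU^*}_{\HS}^2\le\norm{U}_{\op}^2\sum_pD_{pp}^2\norm{U_{\cdot,p}}_2^2$ is then combined with your own column bound $\norm{U_{\cdot,p}}_2^2\le C$ and with $\sum_{p\in\mathcal P_j}\rho_j^{-2}\le 3$. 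This gives $Cn/\delta$ for all scales at once, and the four groups are added by the triangle inequality.
\end{itemize}

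Finally, your deletion caveat is unfounded. $T$ is indexed by edges, so deleting vertices or edges only removes rows and columns of $T$, and $\norm{T}_{\HS}$ can only decrease. Equivalently, $\norm{PU}_{\op}\le\norm{U}_{\op}$ for a coordinate projection $P$. The workaround you propose, $\sum_{e,\ell}\bigl(\sum_i\abs{u_i(e)}\abs{u_i(\ell)}\bigr)^2$, discards all cancellation and is of order $N^2$.
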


\begin{proof}
For one mode on a direction-$\ell$ edge,
\[
 S(f,\overline f)_e=-i\rho_j^{-3}\psi(x/N)\psi(y/N)
 \sin\!\left(\frac{2\pi p^{m,\pm}_\ell}{16N}\right).
\]
Reflection of the transverse frequency cancels the terms with $m\ne\ell$.
The two signs for $m=\ell$ give the difference in \eqref{eq:thinfields},
and coordinate interchange makes the coefficient the same for both edge
directions. Since $\rho_j\le p_1<N$,
$\sin(\pi p_1/(8N))\ge p_1/(4N)$. Each block has
$\rho_j(2\rho_j+1)$ frequencies and contributes at least $1/2$
to $\lambda_{n,h}$, proving \eqref{eq:qidentity}.

For one profile group let $U$ have columns $\rho_j^{1/2}df_{j,p}$.
Embed the annulus in the discrete torus of side $16N$. The selected
frequencies are distinct modulo $16N$, so their Fourier matrix satisfies
$F^*F=(16N)^2I$. On direction-$\ell$ edges the product rule gives
\begin{equation}\label{eq:U}
 U_\ell=P_\ell M_{\Delta_\ell\psi}F\diag(\rho_j^{-1})
 +P_\ell M_{\psi^{(\ell)}}F
  \diag\!\left(\frac{e^{2\pi i p^{m,\pm}_\ell/(16N)}-1}{\rho_j}\right).
\end{equation}
Here $P_\ell$ restricts to retained edge origins,
$\Delta_\ell\psi(x)=\psi((x+e_\ell)/N)-\psi(x/N)$, and
$\psi^{(\ell)}(x)=\psi((x+e_\ell)/N)$; $M$ denotes multiplication.
By \eqref{eq:profilebounds}, $|p^{m,\pm}_\ell|\le2\rho_j$, and
$\rho_j^{-1}\le2^{-h}=\delta$,
\begin{equation}\label{eq:thinoperator}
 \norm U_\op\le C\delta^{-1/2},\qquad
 |U_{e,p}|\le C\delta^{-1/2}/N.
\end{equation}

Every nonzero row is incident to a site in the strict annulus.
There are at most $64\delta N^2$ such retained edges: the closed shell
has $4(S+R)(S-R+1)\le32N(S-R)$ sites, and there are two positive
edge directions. Thus each column has the stronger bound
\begin{equation}\label{eq:thincolumn}
 \sum_e|U_{e,p}|^2\le C.
\end{equation}
For $D=\diag(\rho_j^{-1})$, the matrix of this group is $UDU^*$, and
the Hilbert--Schmidt ideal inequality gives
\begin{equation}\label{eq:weightedcolumns}
 \norm{UDU^*}_{\HS}^2
 \le\norm U_\op^2\sum_p|D_{pp}|^2\norm{U_{\cdot,p}}_2^2.
\end{equation}
Each block contributes at most $3$ to $\sum_p|D_{pp}|^2$.
Equations~\eqref{eq:thinoperator}--\eqref{eq:weightedcolumns}, followed
by the sum over the four groups, give $\norm T_{\HS}^2\le Cn/\delta$.
Retaining \eqref{eq:thincolumn}, rather than bounding each column by
the operator norm, saves one inverse power of $\delta$.

For the other family term,
$\norm f_\infty^2\le C\delta^{-1}\rho_j^{-3}$ and
$\norm{df}_2^2\le C\rho_j^{-1}$.
For scales $\sigma\le\rho$, use the supremum norm at scale $\rho$ and
the gradient norm at scale $\sigma$. Summing over the
$O(\rho^2\sigma^2)$ pairs gives $C\delta^{-1}\sigma/\rho$.
The geometric sum over $\sigma\le\rho$, followed by the at most $n$
choices of $\rho$, is $Cn/\delta$. This proves \eqref{eq:cost}.
Restrictions to retained vertices and edges can only decrease these
norm bounds, and the pointwise identity for $q$ is unchanged.
\end{proof}

\section{Uniform comparison of thin-annulus kernels}
\label{app:annulus}

Use the radii and parameters from Appendix~\ref{app:fourier} and center
the annulus at any $z\in\Z^2$. Let $I,O,U$ be its retained vertices at
radii $R$, $S$, and strictly between them, respectively. Define
\begin{equation}\label{eq:kernel}
 K_{R,S}(\sigma,\tau)=
 \int_{(S^2)^U}
 \exp\!\left(\sum_{e=\{v,w\}\in E_{R,S}}
           b_e(\boldsymbol n_v\cdot\boldsymbol n_w)^2\right)
 \prod_{v\in U}\dd\omega(\boldsymbol n_v),
\end{equation}
where $E_{R,S}$ consists exactly of retained edges incident to $U$, and
$\sigma,\tau$ are arbitrary arrays on $I,O$. Nearest-neighbor geometry
places both endpoints of each such edge in the closed annulus.
Boundary-only edges are left to the intervening regions in
Appendix~\ref{app:averaging}. Empty layers are allowed.
The kernel is positive and satisfies
\begin{equation}\label{eq:simultaneous}
 K_{R,S}(g\sigma,g\tau)=K_{R,S}(\sigma,\tau)
 \qquad(g\in\SO).
\end{equation}

\begin{proposition}[Thin-annulus comparison]\label{prop:thin}
There are $E_\beta,F_\beta\ge0$, uniform in the graph, couplings, center,
and boundary arrays, such that
\begin{equation}\label{eq:thincompare}
 (1-\varepsilon_{n,h})K_{R,S}(g\sigma,\tau)
 \le K_{R,S}(\sigma,\tau),\qquad g\in\mathcal T,
 \qquad
 \varepsilon_{n,h}=E_\beta\frac{2^h}{n-h}
                    +F_\beta\frac{2^{6h}}{2^n}.
\end{equation}
The application below chooses $\varepsilon_{n,h}\le1/2$.
\end{proposition}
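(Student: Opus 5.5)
We will deduce the comparison from the family inequality (Lemma~\ref{lem:family}), applied with the Fourier family of Lemma~\ref{lem:modes}. The plan follows the O($n$) argument of Ref.~\cite[Sections 2--3]{OpenAI2026}. We first reduce each $g\in\mathcal T$ to a rotation about a single axis. Every element of $\mathcal T$ is a rotation by an angle at most $\pi$, and by \eqref{eq:simultaneous} we may conjugate so that $g=e^{\phi c}$ with $|\phi|\le\pi$. The rotation can be split into a bounded number of steps, for instance $e^{\phi c}=(e^{(\phi/k)c})^k$ with fixed $k$. Since inequalities of the form $(1-\varepsilon)K(g\sigma,\tau)\le K(\sigma,\tau)$ compose with factors $(1-\varepsilon)^k\ge1-k\varepsilon$, it suffices to treat a single rotation by angle $|\phi|\le\pi/k$, at the cost of enlarging $E_\beta,F_\beta$.

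\textbf{Change of variables.} For such a rotation we substitute $\boldsymbol n_v\mapsto e^{\phi\Theta_{\alpha,\delta}(v/N)c}\boldsymbol n_v$ on $U$ in \eqref{eq:kernel}. Here $\Theta_{\alpha,\delta}$ is the plateau profile, centered at $z$; it equals $1$ on the inner layer $I$ and $0$ on the outer layer $O$. Because the uniform measure is invariant, this exhibits $K_{R,S}(g\sigma,\tau)/K_{R,S}(\sigma,\tau)$ as a pinned expectation $\av{e^{H(\phi u c)}}=W(\phi u c)$, with $u$ the edge field of \eqref{eq:thinfields}. Only edges in $E_{R,S}$ occur, and the pinned values are arbitrary, which Lemma~\ref{lem:family} permits.

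\textbf{Second-order expansion.} The goal is $W(\phi uc)\le 1+\varepsilon$, since then $K(g\sigma,\tau)\le(1+\varepsilon)K(\sigma,\tau)$, which gives \eqref{eq:thincompare}. We write $W(\phi uc)\le 1-\tfrac{\phi^2}{2}\mathcal R(u,u)+\text{(third-order remainder)}$; the first-order term vanishes by the gauge identity, as in Lemma~\ref{lem:family}. The key identity \eqref{eq:qidentity} gives $q=-i\lambda_{n,h}v$, so $v=iq/\lambda_{n,h}$ and
\[
 \mathcal R(v,\overline v)=\lambda_{n,h}^{-2}\mathcal R(q,\overline q)\le \frac{C_\beta n/\delta}{(n/4)^2}=O_\beta\!\left(\frac{2^h}{n}\right),
\]
using \eqref{eq:family} and \eqref{eq:cost}. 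To pass from $v$ to $u$ we use \eqref{eq:thinfieldbounds}, $|u_e-v_e|\le C\delta^{-3}N^{-3}$, on at most $64\delta N^2$ edges. This gives $\norm{u-v}_2^2\le C\delta^{-5}N^{-4}$. The cross and difference terms are then controlled by $\mathcal R(w,\overline w)\le4\beta\norm w_2^2$ together with a Cauchy--Schwarz-type splitting, which requires care since $\mathcal R$ is not positive. The result is an error $O(\delta^{-6}N^{-2})$ or better, inside the $2^{6h}/2^n$ term.

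\textbf{Main obstacle.} The expansion must be controlled beyond second order uniformly, because $\mathcal R$ is an indefinite second derivative and $\phi$ is not small. Our first attempt is the following. Use log-concavity-free convexity: for $s\in[0,1]$, the function $s\mapsto W(s\phi uc)$ has second derivative $\av{(\phi X(uc))^2+\phi^2H_2(uc,uc)}$ in the measure tilted by $e^{H(s\phi uc)}$. The tilted measure is again a pinned measure with the same coupling bounds, obtained by changing variables back with partial rotation $s\phi$. This is not exactly of the form Lemma~\ref{lem:family} covers, because the edge twist $e^{A_e}$ modifies the couplings. The fallback is to absorb the twist into a rotated pinned system, via rotation by $s\phi\Theta$, as the O($n$) paper does. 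Then the same bound $O(2^h/n)+O(2^{6h}/2^n)$ applies at every $s$, and integration over $s$ gives $W(\phi uc)\le1+\varepsilon_{n,h}$. The Taylor remainders from $u\ne v$ and higher derivatives use $|J_e|\le M$ and $\norm u_\infty\le C/(\delta N)$, giving contributions of size $\delta^{-3}N^{-1}$ and smaller, all absorbed in $F_\beta 2^{6h}/2^n$.
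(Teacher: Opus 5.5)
Your setup is sound. The change of variables identifying $K_{R,S}(e^{\phi c}\sigma,\tau)/K_{R,S}(\sigma,\tau)$ with $W(\phi uc)$ is correct. So is the bound $\mathcal R(v,v)=\lambda_{n,h}^{-2}\mathcal R(q,\overline q)\le O_\beta(2^h/n)$. The reduction of each $g\in\mathcal T$ to one axis by conjugation also works, and is a clean substitute for the paper's factorization into three coordinate-axis rotations.

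The gap is in the closing step, and it has two parts.

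\emph{The linear term does not vanish.} The gauge identity $W(t(df)g)=1$ requires $f=0$ on the pinned set, but $\Theta_{\alpha,\delta}=1$ on the inner layer $I$. With $Z(t)=K_{R,S}(e^{tc}\sigma,\tau)$ you have $W(suc)=Z(s)/Z(0)$, so $DW(0)[uc]=Z'(0)/Z(0)$, which is generically nonzero. Splitting into $k$ steps does not help, since for small $\phi$ the linear term dominates the quadratic one.

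\emph{The available inequality points the wrong way for your target.} Lemma~\ref{lem:family} bounds the indefinite form only from above. It gives $-Z''(t)/Z(t)=\mathcal R_t(u,u)\le\eta$ with $\eta=O_\beta(2^h/n)+O_\beta(2^{6h}/2^n)$, i.e.\ $Z''\ge-\eta Z$. Integrating this from $s=0$ yields a \emph{lower} bound on $Z(\phi)$ relative to the tangent line at $0$. Your expansion $W(\phi uc)\le1-\tfrac{\phi^2}{2}\mathcal R(u,u)+\cdots$ would instead need a lower bound on $\mathcal R(u,u)$, which nothing provides. So the $s$-integration in your last paragraph cannot give $W(\phi uc)\le1+\varepsilon_{n,h}$.

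The missing idea is periodicity. Your fallback is right that the tilted measure at each $t$ is just the pinned measure with inner array $e^{tc}\sigma$. Hence $Z''(t)\ge-\eta Z(t)\ge-\eta\max Z$ for every $t$. Since $Z$ is positive and $2\pi$-periodic, pick a maximizer $t_0$, where $Z'(t_0)=0$. Integrating twice over $|t-t_0|\le\pi$ gives $\min Z\ge(1-\tfrac{\pi^2}{2}\eta)\max Z$. This treats all angles at once, so no splitting is needed, and your conjugation then yields \eqref{eq:thincompare} for every $g\in\mathcal T$.

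A secondary point concerns the transfer from $v$ to $u$. Cauchy--Schwarz for $\mathcal R$ is unavailable, and $\mathcal R(w,\overline w)\le4\beta\norm{w}_2^2$ does not control cross terms. Instead, if $\mathcal R_t(u,u)\le0$ there is nothing to prove. Otherwise $\av{X_t(uc)^2}_t\le-\av{H_2(uc,uc)}_t\le M\norm{u}_2^2$, and for $d=u-v$ you have the pointwise bound $|X_t(dc)|\le M\norm{d}_1$. Cauchy--Schwarz for random variables then bounds the difference of the current-square terms. The resulting error is $O_\beta(\delta^{-6}/N)$ rather than your claimed $O(\delta^{-6}N^{-2})$, but it is still absorbed in $F_\beta2^{6h}/2^n$.
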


\begin{proof}
First rotate about the third coordinate axis and put
$Z(t)=K_{R,S}(e^{tc}\sigma,\tau)$. Apply the pinned calculus of
Appendix~\ref{app:input} at each $t$, writing its response as
$\mathcal R_t$. Translating the profiles to $z$ is understood.
The change of variables
$\boldsymbol n_v^{\rm old}
 =e^{-s\Theta_{\alpha,\delta}((v-z)/N)c}\boldsymbol n_v^{\rm new}$
gives the exact identity
\begin{equation}\label{eq:boundarygauge}
 W_t(suc)=\frac{Z(t+s)}{Z(t)},\qquad
 \mathcal R_t(u,u)=-\frac{Z''(t)}{Z(t)}.
\end{equation}
It rotates each inner pinned spin and fixes each outer pinned spin;
the pinned arrays need not be constant.

Since $v$ is real, Lemmas~\ref{lem:family} and \ref{lem:modes} yield
\begin{equation}\label{eq:approxresponse}
 \mathcal R_t(v,v)
 =\lambda_{n,h}^{-2}\mathcal R_t(q,\overline q)
 \le\frac{K_\beta}{\delta n}
\end{equation}
for a nonnegative $K_\beta$. To transfer this to $u$, put $d=u-v$.
If $\mathcal R_t(u,u)\le0$, the desired upper bound is immediate.
Otherwise \eqref{eq:R} and \eqref{eq:local}, in a box with $O(N^2)$
edges, give
\[
 \av{X_t(uc)^2}_t\le M\norm u_2^2\le C_\beta\delta^{-2},
 \qquad |X_t(dc)|\le M\norm d_1\le C_\beta\delta^{-3}/N.
\]
Ordinary Cauchy--Schwarz for these random variables bounds the difference
of the current-square terms by
$C_\beta(\delta^{-4}/N+\delta^{-6}/N^2)$.
The local quadratic terms differ by at most
$M\norm d_1\norm{u+v}_\infty$, which is bounded by
$C_\beta(\delta^{-4}/N^2+\delta^{-6}/N^4)$.
Using $0<\delta\le1$ and $N\ge1$, in either case we obtain
\begin{equation}\label{eq:thinresponse}
 -\frac{Z''(t)}{Z(t)}
 \le\frac{K_\beta}{\delta n}
     +\frac{V_\beta}{\delta^6N}=:\eta.
\end{equation}
No Cauchy--Schwarz inequality for the possibly indefinite form
$\mathcal R_t$ has been used.

The positive function $Z$ is $2\pi$-periodic. At a maximum $t_0$,
$Z'(t_0)=0$ and $Z''(t)\ge-\eta\max Z$.
Twice integrating over an interval of length at most $2\pi$ gives
\begin{equation}\label{eq:periodicbound}
 \min Z\ge(1-2\pi^2\eta)\max Z.
\end{equation}
Simultaneous rotation invariance gives the same bound for either other
coordinate axis. Every element of $\mathcal T$ has a factorization into
at most three coordinate-axis rotations. Thus, when $2\pi^2\eta\le1$,
their combined loss is at most $6\pi^2\eta$, by
$(1-s)^3\ge1-3s$. If $6\pi^2\eta\ge1$, the comparison with that
loss is automatic from positivity. Taking
$E_\beta=6\pi^2K_\beta$, $F_\beta=6\pi^2V_\beta$, and weakening
$1/n$ to $1/(n-h)$ proves \eqref{eq:thincompare}.
\end{proof}

\section{Exact rotational averaging and correlation contraction}
\label{app:averaging}

We use only annuli for which Proposition~\ref{prop:thin} gives comparison
error at most $1/2$. Products of contraction factors from nested rotation
layers have a precedent in Ref.~\cite[Section 2.1]{Ioffe2002}; the
normalization identity below makes this argument exact for the present
unnormalized kernels.

For one such kernel define
\begin{equation}\label{eq:average}
 (\mathcal A K)(\sigma,\tau)
 =\frac1{12}\sum_{g\in\mathcal T}K(g\sigma,\tau)
 =:\overline K(\sigma,\tau),\qquad
 K^{\rm rem}=2K-\overline K.
\end{equation}
The comparison gives $K\ge\overline K/2$, hence
\begin{equation}\label{eq:split}
 K=\tfrac12\overline K+\tfrac12K^{\rm rem},\qquad
 K^{\rm rem}\ge0,\qquad
 \mathcal A K^{\rm rem}=\overline K.
\end{equation}
Here $\mathcal A^2=\mathcal A$. The average is invariant under a common
$\mathcal T$ rotation of the inner boundary alone, and, by
\eqref{eq:simultaneous}, of the outer boundary alone. Both kernels in
\eqref{eq:split} retain simultaneous $\mathcal T$ invariance.
This subtraction of a common positive part is the elementary algebra of
splitting constructions \cite{Nummelin1978}; it does not require a
Markov-chain interpretation.

Place finitely many nested annuli around $x$, with disjoint closed
annular regions, each separating $x$ from $y$. Intersect all regions
with the retained graph; an annulus need not be contained in $V$.
Integrate its strict interior and assign to its kernel exactly the
edges incident to that interior, as in \eqref{eq:kernel}. All other
edges remain in the intervening regions. Nearest-neighbor geometry
prevents an interaction from jumping across a boundary layer, so this
gives an exact factorization of the Gibbs integral.

\begin{lemma}[Preservation of the partition function]\label{lem:normalization}
Replacing any collection of these annular kernels by $K^{\rm rem}$
leaves the partition function unchanged.
\end{lemma}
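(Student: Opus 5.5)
The plan is to write the Gibbs integral for the partition function as an iterated integral over the spins on the boundary layers, with the annular kernels and the intervening-region weights as factors, and then to use the invariances of these factors under rotations by $\mathcal T$. The factorization described just before the lemma shows that the partition function $Z$ equals
\[
\int\prod_k K_k(\sigma_k,\tau_k)\,\Phi(\sigma,\tau)\prod\dd\omega,
\]
where the integral runs over all boundary-layer spins. Here $K_k$ is the $k$-th annular kernel, $\sigma_k$ and $\tau_k$ are its inner and outer boundary arrays, and $\Phi$ collects the Boltzmann weights of the intervening regions (including boundary-only edges) together with their integrated interior spins. Since $K^{\rm rem}=2K-\mathcal AK$, it suffices, by linearity and induction on the number of replaced kernels, to prove one statement. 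Fix any subset of kernels that has already been replaced, and replace one further kernel $K_k$ by $\mathcal AK_k$. The claim is that the integral does not change. Then $2Z-Z=Z$ at each step, and this is exactly the single-replacement identity behind \texttt{partition\_replace}.

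To prove that replacing $K_k$ by $\overline K_k$ preserves the integral, write $\overline K_k(\sigma_k,\tau_k)=\frac1{12}\sum_{g\in\mathcal T}K_k(g\sigma_k,\tau_k)$. For each $g$, perform the change of variables $\boldsymbol n_v\mapsto g^{-1}\boldsymbol n_v$ for every spin $v$ lying inside the outer boundary of annulus $k$, including its inner boundary layer. These are exactly the spins in the disk bounded by annulus $k$ together with its interior and boundary. The uniform measure $\omega$ is $\SO$-invariant, so the reference measure does not change. The interaction $(\boldsymbol n_v\cdot\boldsymbol n_w)^2$ is invariant under a simultaneous rotation of both endpoints. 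Every factor lying entirely inside the rotated region is therefore unchanged. This includes the intervening-region weights there and the nested inner kernels, whether original or remainder kernels, because both retain simultaneous $\mathcal T$ invariance by \eqref{eq:split} and \eqref{eq:simultaneous}.

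The only factor that straddles the rotated and unrotated regions is $K_k$ itself. No edge crosses from the inner boundary layer to outside the annulus except through the annulus's own interior: nearest-neighbor geometry and the disjointness of the closed annular regions rule this out. After the substitution, the term $K_k(g\sigma_k,\tau_k)$ becomes $K_k(\sigma_k,\tau_k)$. Averaging over $g$ then returns the original integrand, which proves the claim.

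I expect the main obstacle to be the bookkeeping of the factorization rather than any analysis. One has to check that the rotated region is closed under all interaction edges other than those assigned to $K_k$, and that the region is exactly what lies inside the outer boundary. One also has to handle retained graphs in which an annulus is only partially present or has empty layers. Finally, one must make sure the observables play no role here: the lemma concerns the partition function only, so no observable at $x$ enters. Once it is verified that every edge is assigned to exactly one factor and that an edge incident to the inner boundary layer belongs either to $K_k$ or to an inner region, the invariance argument goes through.
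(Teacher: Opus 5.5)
Your proof is correct and follows the paper's argument. You rotate, by each $g\in\mathcal T$, the inner boundary layer of annulus $k$ and everything it encloses, and you use simultaneous $\mathcal T$-invariance of the reference measure, the inner weights and the (possibly already replaced) inner kernels to show that $\overline K_k$ may be replaced by $K_k$; you then proceed one replacement at a time. The paper differs only in packaging: it first integrates the interior into an invariant function $F(\sigma)$ and writes $\int F K^{\rm rem}=\int F\,\mathcal A K^{\rm rem}=\int F\,\mathcal A K=\int F K$, which is the same identity as your $K^{\rm rem}=2K-\overline K$.
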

\begin{proof}
For one annulus integrate everything strictly inside it, including any
previously replaced kernels, and absorb the inner boundary-only
interactions into a function $F(\sigma)$ of its inner boundary array.
There are no externally pinned spins in this region. Simultaneous
$\mathcal T$ invariance of every factor and of the reference measure
gives $F(g\sigma)=F(\sigma)$.
For each fixed outer array $\tau$, a change of variables and finite
averaging therefore give
\begin{align*}
 \int F(\sigma)K^{\rm rem}(\sigma,\tau)\,\dd\omega_I(\sigma)
 &=\int F(\sigma)(\mathcal A K^{\rm rem})(\sigma,\tau)
                         \,\dd\omega_I(\sigma)\\
 &=\int F(\sigma)(\mathcal A K)(\sigma,\tau)\,\dd\omega_I(\sigma)\\
 &=\int F(\sigma)K(\sigma,\tau)\,\dd\omega_I(\sigma).
\end{align*}
The equality holds before integrating the exterior. Repeat one
replacement at a time; previously replaced kernels retain the required
simultaneous invariance. Empty layers and disconnected components do
not affect the identity.
\end{proof}

\begin{lemma}[Contraction by separating annuli]\label{lem:product}
If $m$ such annuli separate $x$ from $y$, then
\begin{equation}\label{eq:product}
 |\calC_Q^{G,b}(x,y)|\le\tfrac23\,2^{-m}.
\end{equation}
The statement includes $m=0$.
\end{lemma}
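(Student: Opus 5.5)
We write the Gibbs numerator and the partition function in factorized form, split each annular kernel, expand, and show that every term containing an averaged kernel vanishes in the numerator. Label the $m$ annuli $A_1,\dots,A_m$ from the inside out, all centered at $x$ and separating $x$ from $y$. By the exact factorization described before Lemma~\ref{lem:normalization},
\[
 Z=\int \prod_k K_k(\sigma_k,\tau_k)\,\prod(\text{intervening factors})\,\dd\omega,
\]
and the numerator $N=Z\,\calC_Q^{G,b}(x,y)$ is the same integral with the extra factor $\tr[Q(\boldsymbol n_x)Q(\boldsymbol n_y)]$. Insert $K_k=\tfrac12\overline K_k+\tfrac12K_k^{\rm rem}$ from \eqref{eq:split} into every kernel and expand. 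This gives a sum over subsets $A\subset\{1,\dots,m\}$, namely the annuli receiving the averaged kernel:
\[
 N=\sum_{A}2^{-m}N_A,\qquad
 N_A=\int \tr[Q(\boldsymbol n_x)Q(\boldsymbol n_y)]
 \prod_{k\in A}\overline K_k\prod_{k\notin A}K_k^{\rm rem}\cdots\,\dd\omega.
\]

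\textbf{Step 1: every term with $A\neq\emptyset$ vanishes.} Fix $k\in A$ and integrate everything strictly inside $A_k$, including the remaining (possibly replaced) kernels and the factor $Q(\boldsymbol n_x)$. The result is a matrix-valued function $F(\sigma)$ of the inner boundary array. The interior factors and the reference measure are simultaneously $\mathcal T$-invariant, so the change of variables $\boldsymbol n\mapsto g\boldsymbol n$ on the whole interior gives $F(g\sigma)=gF(\sigma)g^T$, using $Q(g\boldsymbol n)=gQ(\boldsymbol n)g^T$.

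Since $\overline K_k(g\sigma,\tau)=\overline K_k(\sigma,\tau)$, we get
\[
 \int F(\sigma)\overline K_k(\sigma,\tau)\,\dd\omega_I
 =\frac1{12}\sum_{g\in\mathcal T}gMg^T,
\]
where $M=\int F(\sigma)\overline K_k(\sigma,\tau)\,\dd\omega_I$. The matrix $M$ is symmetric and traceless because $Q$ is. The tetrahedral average of such a matrix is zero by the same calculation as \eqref{eq:Qfacts}: the sign average removes the off-diagonal entries, and the cyclic average equalizes the diagonal ones, whose sum is zero.

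I must check that this argument holds for an arbitrary symmetric traceless $M$ and not just for $Q(\boldsymbol n)$; it does, by linearity, because the cancellation identity is linear in $\boldsymbol n\boldsymbol n^T$ and such matrices span the symmetric matrices. Hence $N_A=0$ whenever $A\ne\emptyset$. The point $y$ lies outside $A_k$, so $Q(\boldsymbol n_y)$ is untouched by this step.

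\textbf{Step 2: the surviving term.} Only $A=\emptyset$ survives, so $N=2^{-m}N_\emptyset$. Then
\[
 |N_\emptyset|\le\tfrac23\int\prod_k K_k^{\rm rem}\cdots\,\dd\omega=\tfrac23\,Z ,
\]
using $K^{\rm rem}\ge0$ (which needs the comparison $\varepsilon\le1/2$), the pointwise bound $|\tr QQ|\le2/3$, and Lemma~\ref{lem:normalization} applied to all annuli at once. Dividing by $Z>0$ gives \eqref{eq:product}. When $m=0$, the bound is just the pointwise bound.

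\textbf{Main obstacle.} The main subtlety is Step 1 when other replaced kernels lie inside $A_k$. The equivariance $F(g\sigma)=gF(g\sigma)g^T$-type relation requires every interior factor, including each $K^{\rm rem}_j$ and each $\overline K_j$, to be invariant under the simultaneous $\mathcal T$-action, which is guaranteed by \eqref{eq:split}. It also requires the annuli to be nested and disjoint, so that the region inside $A_k$ is well defined and carries no pinned spins; this holds for free boundary conditions.
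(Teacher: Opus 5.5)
Your proposal is correct and follows the paper's proof: you expand the numerator with $K=\tfrac12\overline K+\tfrac12K^{\rm rem}$, and you kill every term containing an averaged kernel by a common tetrahedral rotation of the inner side of that annulus. You first integrate the interior into a matrix function $F$ with $F(g\sigma)=gF(\sigma)g^T$, which is the same invariance argument the paper applies directly, and you then bound the all-remainder term using $K^{\rm rem}\ge0$, $|\tr(QQ)|\le2/3$ and Lemma~\ref{lem:normalization}; the only slip is the typo in your last paragraph, where the relation should read $F(g\sigma)=gF(\sigma)g^T$.
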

\begin{proof}
Expand the numerator of the Gibbs quotient using \eqref{eq:split}.
For a term selecting $\overline K$ at some annulus, rotate all spins
and boundary variables on its inner side by a common $g\in\mathcal T$.
The measure, the internal weights and kernels, and the selected averaged
kernel are invariant. Only the observable changes, to
$\tr(gQ(\boldsymbol n_x)g^TQ(\boldsymbol n_y))$.
Averaging over $g$ kills the term by \eqref{eq:Qfacts}.
Only the term selecting every remainder survives, with coefficient
$2^{-m}$. Its weight is nonnegative and its observable has absolute
value at most $2/3$, so its numerator is at most $(2/3)Z_{\rm rem}$
in absolute value. Lemma~\ref{lem:normalization} gives
$Z_{\rm rem}=Z_{G,b}>0$. Division proves \eqref{eq:product}.
\end{proof}

The partition identity, rather than independence of annuli, controls the
denominator. Both the kernel comparison and this averaging argument
include every director configuration, including configurations with
projective defects.

\section{Packing the thin annuli and proving the main theorem}
\label{app:thinpacking}

We now select the widths so that the number of contractions grows
quadratically in the logarithm of the separation.
Choose an integer $A\ge\max\{4,8E_\beta\}$. For each sufficiently
large integer $n$, select the unique $h=h(n)\ge0$ satisfying
\begin{equation}\label{eq:thinchoice}
 A2^h<n\le2A2^h.
\end{equation}
For $n>2A$, one has $h\ge1$ and $2h\le n$. Moreover,
\[
 E_\beta\frac{2^h}{n-h}\le\frac{2E_\beta}{A}\le\frac14,
 \qquad
 F_\beta\frac{2^{6h}}{2^n}\le F_\beta\frac{n^6}{2^n}\longrightarrow0.
\]
Choose an integer $a=a(\beta)\ge8$ so that every $n\ge a$ also has
$n-h\ge4$ and $\varepsilon_{n,h}\le1/2$.
All choices depend only on $\beta$.

In the band $[2^n,2^{n+1}]$ centered at $x$, set
\begin{equation}\label{eq:thinpacking}
 w_n=2^{n-h(n)},\quad m_n=2^{h(n)-1},\quad
 R_{n,j}=2^n+2jw_n,\quad S_{n,j}=R_{n,j}+w_n,
 \quad 0\le j<m_n.
\end{equation}
The identity $2m_nw_n=2^n$ puts all these annuli strictly below the
outer edge of the band. Their widths and intervening gaps are $w_n\ge16$;
different bands are also separated. Thus the closed annuli are disjoint,
their inner cuts are nested, and their allocated edge sets do not overlap.
Equation~\eqref{eq:thinchoice} gives
\begin{equation}\label{eq:linearshellcount}
 m_n\ge\frac{n}{4A}.
\end{equation}

Put $N_0=2^a$ and, for $r=\norm{x-y}_\infty$, define
\[
 k=\#\{j\ge0:2N_0\,2^j<r\},\qquad
 M_k=\sum_{j=0}^{k-1}m_{a+j},\qquad \gamma=\frac1{4A}.
\]
Every selected annulus separates the observation sites.
Using \eqref{eq:linearshellcount} and
$k(k-1)/2\ge(k^2-1)/4$, Lemma~\ref{lem:product} gives
\begin{equation}\label{eq:packedbound}
 |\calC_Q^{G,b}(x,y)|\le\tfrac23\,2^{-M_k},\qquad
 M_k\ge\tfrac\gamma4(k^2-1).
\end{equation}
These inequalities also hold for $k=0$, by the uniform observable bound
in \eqref{eq:Qfacts}. Set $d=\gamma\log2/4$ and $B=(2/3)e^d$;
then the right side is at most $Be^{-dk^2}$.

By the definition of $k$, $r\le2N_0\,2^k$. With
$a_0=\log(2N_0+2)$ this implies
\[
 [\log(2+r)]^2\le2a_0^2+2k^2(\log2)^2.
\]
Consequently \eqref{eq:logsquare} holds with
\begin{equation}\label{eq:logsquareconstants}
 c_\beta=\frac{d}{2(\log2)^2}=\frac1{32A\log2},\qquad
 C_\beta=B\exp(2c_\beta a_0^2).
\end{equation}
This proves Theorem~\ref{thm:main} uniformly at every distance and
completes the finite-volume argument. At $\beta=0$ we may still use the
comparison error $1/2$, so no division by a vanishing error is needed.
The displayed constants are convenient choices, not optimized physical
correlation-length estimates.

\Needspace{18\baselineskip}
\section{Local limits and consequences of the main bound}
\label{app:scales}

\begin{corollary}[Free-boundary limits and summability]\label{cor:limits}
Every specified local weak limit of the free-boundary measures
\eqref{eq:model} along an exhaustion obeys \eqref{eq:logsquare} with the
same $C_\beta,c_\beta$. For every such limit, every fixed $x\in\Z^2$,
and every $s\in\R$,
\begin{equation}\label{eq:moments}
 \sum_{y\in\Z^2}(1+\norm{x-y}_\infty)^s| \calC_Q^{G,b}(x,y)|<\infty.
\end{equation}
\end{corollary}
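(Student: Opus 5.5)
The plan is to pass the finite-volume estimate of Theorem~\ref{thm:main} through the weak limit, and then obtain the moment bound by a direct sum. Fix an exhaustion $G_k\uparrow\Z^2$ of finite subgraphs and coupling arrays $b^{(k)}\in[-\beta,\beta]^{E_k}$. Assume the specified local weak limit $\mu$ is the limit of $\mu_{G_k,b^{(k)}}$ along a subsequence. Here "local weak" means convergence of expectations of every bounded continuous function depending on finitely many spins; by compactness of $(S^2)^{\Z^2}$, such limits exist along subsequences, but existence is not needed for the statement.

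\textbf{Step 1: pass the bound to the limit.} For fixed $x,y\in\Z^2$, the observable $\tr(Q(\nv_x)Q(\nv_y))=(\nv_x\cdot\nv_y)^2-\tfrac13$ is a bounded continuous cylinder function. Hence the limiting correlation is $\lim_k \calC_Q^{G_k,b^{(k)}}(x,y)$ along the subsequence; this uses $x,y\in V_k$ for large $k$, which holds for an exhaustion. By Theorem~\ref{thm:main}, each term obeys \eqref{eq:logsquare} with constants $C_\beta,c_\beta$ chosen before the graph, the couplings and the sites. The right-hand side does not depend on $k$, so the bound survives the limit with the same constants. The same argument applies verbatim to Theorem~\ref{thm:highT} and Theorem~\ref{thm:RP3}.

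\textbf{Step 2: summability.} Write $\calC(x,y)$ for the limiting correlation. Group the sites $y$ by $r=\norm{x-y}_\infty$; each $\ell^\infty$ sphere contains at most $8r+1$ sites. The sum in \eqref{eq:moments} is then bounded by
\[
\sum_{r\ge0}(8r+1)(1+r)^{s}C_\beta e^{-c_\beta[\log(2+r)]^2}.
\]
For $s<0$, bound $(1+r)^s\le1$ and reduce to the case $s=0$. For $s\ge0$, each term is at most $C(2+r)^{s+1-c_\beta\log(2+r)}$. Once $c_\beta\log(2+r)\ge s+3$, the terms are dominated by $(2+r)^{-2}$, which is summable. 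Finitely many earlier terms are each finite, so the series converges.

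\textbf{Main obstacle.} No step is hard here. The one place that needs care is interpretive: $\calC$ in \eqref{eq:moments} must be read as the correlation of the limiting measure, not of a finite $(G,b)$. The uniformity of the constants in Theorem~\ref{thm:main} with respect to $G$ and $b$, which is what lets the bound survive the limit, is exactly what the theorem provides.
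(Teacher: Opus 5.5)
Your proposal is correct and follows the paper's proof: you pass the uniform finite-volume bound through the limit using the bounded continuous two-spin observable (once both sites lie in the volume), then sum over $\ell^\infty$ shells of $O(r)$ sites. The only cosmetic difference is that the paper first converts the log-square bound into $C_\beta e^{p^2/(4c_\beta)}(2+r)^{-p}$ via $ct^2-pt\ge -p^2/(4c)$ with $p>\max\{0,s+2\}$, whereas you compare exponents directly for large $r$.
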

\begin{proof}
The two-spin observable is bounded and continuous, so its expectation
passes to the limit. Only convergence of the two-spin marginals along
the specified common sequence is needed. Spins assigned outside a finite
volume to embed its measure create no boundary interactions and do not
alter the marginal once both sites are present.
For any $p>0$, the elementary inequality
$ct^2-pt\ge-p^2/(4c)$ gives directly from the main bound
\[
| \calC_Q^{G,b}(r)|\le
 C_\beta e^{p^2/(4c_\beta)}(2+r)^{-p}.
\]
Choose $p>\max\{0,s+2\}$. Since a square-lattice shell of integer
radius $r\ge1$ has $8r$ sites, the weighted sum converges.
\end{proof}

\begin{corollary}[Smooth correlation Fourier series]\label{cor:fourier}
For a limit in Corollary~\ref{cor:limits} and fixed $x$, the series
\[
 S_x(k)=\sum_{y\in\Z^2} \calC_Q^{G,b}(x,y)e^{ik\cdot(y-x)},\qquad k\in\R^2,
\]
converges absolutely, is $C^\infty$, and may be differentiated term
by term to every finite order.
\end{corollary}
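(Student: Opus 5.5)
The plan is to derive everything from the weighted summability in Corollary~\ref{cor:limits} together with the standard theorem on differentiating series term by term. Fix $x$ and write $c_y=\calC_Q^{G,b}(x,y)$. Each summand $k\mapsto c_y e^{ik\cdot(y-x)}$ is an entire function of $k\in\R^2$. Any mixed partial derivative of order $\alpha=(\alpha_1,\alpha_2)$ equals $c_y\, i^{|\alpha|}(y-x)^\alpha e^{ik\cdot(y-x)}$. Since $|(y-x)^\alpha|\le\norm{x-y}_\infty^{|\alpha|}\le(1+\norm{x-y}_\infty)^{|\alpha|}$, its modulus is bounded, uniformly in $k$, by
\[
 M_y^{(|\alpha|)}:=(1+\norm{x-y}_\infty)^{|\alpha|}\,|c_y|.
\]

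The key step is the observation that $\sum_y M_y^{(j)}<\infty$ for every integer $j\ge0$. This is exactly \eqref{eq:moments} with $s=j$. Hence the Weierstrass M-test gives the following for every multi-index $\alpha$: the formally differentiated series $\sum_y\partial^\alpha[c_ye^{ik\cdot(y-x)}]$ converges absolutely and uniformly on all of $\R^2$. The case $\alpha=0$ is the absolute convergence of $S_x$ itself.

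Next I would argue by induction on $|\alpha|$. The ingredients are the classical theorem that a series of $C^1$ functions converges to a $C^1$ function with termwise derivative, provided it converges at a point and its termwise derivative series converges uniformly. Applying this in each coordinate direction shows that $\partial^\alpha S_x$ exists, is continuous (as a uniform limit of continuous functions), and equals the termwise-differentiated series. Alternatively, one may integrate along coordinate segments and exchange sum and integral by uniform convergence. Since this holds for all $\alpha$, $S_x\in C^\infty$.

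There is no real obstacle here. The only point to check is that the corollary's moment bound is available for every real $s$, and in particular for arbitrarily large integer $s$. That is precisely what the superpolynomial decay \eqref{eq:logsquare} supplies, via the bound $|c_y|\le C_\beta e^{p^2/(4c_\beta)}(2+r)^{-p}$ for every $p$. A mere power-law bound would only give finitely many derivatives.
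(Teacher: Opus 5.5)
Your proposal is correct and follows the paper's proof essentially verbatim: the paper also bounds each order-$m$ derivative of the summands, uniformly in $k$, by a constant times the summand of \eqref{eq:moments} with $s=m$, and concludes from uniform convergence of the differentiated series. Your explicit bound $|(y-x)^\alpha|\le(1+\norm{x-y}_\infty)^{|\alpha|}$, the Weierstrass M-test and the termwise-differentiation theorem simply spell out details the paper leaves implicit.
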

\begin{proof}
Every derivative of order $m$ has a uniform majorant given by a constant
times the summand in \eqref{eq:moments} with $s=m$. Uniform convergence
of these derivative series proves the claim.
\end{proof}

Translation invariance is unnecessary for this fixed-base-point series;
when present it gives the usual quadrupolar structure factor.
Smoothness does not imply analyticity. The local-limit statements assume
the specified convergence; this proof does not establish uniqueness, a
DLR characterization, or bounds with arbitrary pinned external boundary
conditions. The uniform pinned estimates inside the proof are used only
for annular kernels of the free-boundary model.

In particular, \eqref{eq:logsquare} rules out
$|\calC_Q(r)|\sim A r^{-\eta}(\log r)^q$ with $A>0$ and finite real
$\eta,q$, but neither a singularity in a different observable sector nor
all slower-than-exponential possibilities. A rotation-invariant energy
insertion is not annihilated by \eqref{eq:Qfacts}. The additional
correlation inequalities were used to upgrade sufficiently fast decay to
exponential decay for bilinear ferromagnetic models
\cite[Theorem 4.1 and Corollary 4.2]{AizenmanSimon1980};
see also Ref.~\cite{Simon1980}.
The corresponding inequalities have not been
established here for the projective quadrupolar observable.

\begingroup
\small
\setlength{\bibsep}{2pt}
\interlinepenalty=10000
\bibliographystyle{unsrtnat}
\bibliography{rp2-paper-references}
\endgroup

\end{document}